\newif\iffinal
\newif\ifnoproofs
  \tikzset{%
    initial text={}%
}
  \tikzset{%
    input state/.style={state,fill=\inColor,fill opacity=0.5,text opacity=1,rectangle,minimum size=7mm,inner sep=1pt},%
   output state/.style={state,fill=\outColor,fill opacity=0.5,text opacity=1,circle,minimum size=8.25mm,inner sep=1pt},%
   trans state/.style={state,fill=lightgray,fill opacity=0.5,text opacity=1,circle,minimum size=8.25mm,inner sep=1pt},%
  }
\definecolor{rose}{rgb}{0.75, 0, 0.2}
\definecolor{myblue}{rgb}{0, 0.2, 0.7}
  \newcommand{\leo}[1]{}
  \NewDocumentCommand \todo{g} {}
  \NewDocumentCommand \todo{g} { \IfValueTF{#1} {{\color{Red}{\scriptsize /#1/}}} {{\color{Red}{\small \tt todo}}} }
  \newcommand{\leo}[1]{}
\newcommand{\cmark}{\ding{51}}%
\newcommand{\xmark}{\ding{55}}%
\theoremstyle{definition}
\newtheorem*{definition*}{Definition}
\newtheorem{fact}{Fact}
\newtheorem*{fact*}{Fact}
\theoremstyle{remark}
\newtheorem*{example*}{Example}
\newtheorem*{discussion*}{Discussion}
\newcommand{\bbN}{\mathbb{N}}
\newcommand{\bbQ}{\mathbb{Q}}
\newcommand{\bbQP}{\mathbb{Q}_+}
\newcommand{\bbZ}{\mathbb{Z}}
\newcommand{\Impl}{\Rightarrow}
\newcommand{\Implied}{\Leftarrow}
\newcommand\LAND\bigwedge
\newcommand\LOR\bigvee
\newcommand\x{\mkern 1mu{\times}\mkern 1mu}
\newcommand\li{\begin{itemize}}
\newcommand\il{\end{itemize}}
\renewcommand{\-}{\item}  
\newcommand\lo{\begin{enumerate}}
\newcommand\ol{\end{enumerate}}
\newcommand\trans[1][]{
  \xrightarrow[{\raisebox{1.25ex-\heightof{$\scriptstyle#1$}}[0pt]{$\scriptstyle#1$}}]%
}
\let\emptyset\varnothing
\newcommand\G\always\xspace
\renewcommand{\:}{\colon}
\let\oldsharp\#
\renewcommand{\#}{\raisebox{0.1em}{\scalebox{0.8}{{\oldsharp}}}}   
\newcommand{\myparagraph}[1]{\subparagraph{#1.}}
\definecolor{ggray}{rgb}{0.9, 0.9, 0.9}
\newcommand{\paritraw}[1]{\smallskip\noindent {\it #1}}
\newcommand{\parit}[1]{\paritraw{#1.}}
\newcommand{\size}[1]{\left| #1 \right|}
\newcommand{\length}[1]{\left\lvert #1 \right\rvert}
\newcommand{\specialcellC}[2][c]{%
  \begin{tabular}[#1]{@{}c@{}}#2\end{tabular}%
}
\newcommand{\da}{\downarrow\!}
\newcommand{\clen}{\mathit{cl\hspace{-0.33mm}en}}
\newcommand{\clenR}[2]{\pi_{#1,#2}}
\newcommand{\inWord}{\mathsf{x}}
\newcommand{\lab}{\mathsf{lab}}
\renewcommand\d{{\mathcal d}}
\newcommand\dI{\d^{\frak i}}
\newcommand\dO{\d^{\frak o}}
\renewcommand{\v}{\nu}
\newcommand{\Ss}{S_\textit{synt}}
\newcommand{\Ts}{T_\textit{synt}}
\renewcommand{\a}{\seq{a}}
\newcommand{\ak}{\seq{a}_k}
\newcommand{\aS}{\seq{a}_S}
\newcommand{\Wf}{W^{\textsf{F}}_{S,k}}
\newcommand{\Wqf}{W^{\textsf{QF}}_{S,k}}
\newcommand\wB{\ensuremath{\color{black}\omega}B\xspace}
\newcommand\wS{\ensuremath{\color{black}\omega}S\xspace}
\newcommand\D{\mathcal D}
\newcommand\bbD{\mathbb D}
\newcommand\DN{(\bbN,<,0)}
\newcommand\DZ{(\bbZ,<,0)}
\newcommand\DQ{(\bbQ,<,0)}
\newcommand\DQP{(\bbQP,<,0)}
\newcommand{\tst}{\textnormal{\textsf{tst}}}
\newcommand{\asgn}{\textnormal{\textsf{asgn}}}
\newcommand{\Tst}{\textnormal{\textsf{Tst}}}
\newcommand{\MTst}{\textnormal{\textsf{MTst}}}
\newcommand{\Asgn}{\textnormal{\textsf{Asgn}}}
\newcommand\tstI{\tst^\frak i}
\newcommand\tstSI{\tst^{{\scalebox{0.66}{\!$S$}}\frak i}}
\newcommand\tstO{\tst^\frak o}
\newcommand\tstSO{\tst^{{\scalebox{0.66}{\!$S$}}\frak o}}
\newcommand\asgnI{\asgn^\frak i}
\newcommand\asgnSI{\asgn^{{\scalebox{0.66}{\!$S$}}\frak i}}
\newcommand\asgnO{\asgn^\frak o}
\newcommand\asgnSO{\asgn^{{\scalebox{0.66}{\!$S$}}\frak i}}
\newcommand{\AW}{\textsf{AW}}
\newcommand{\TW}{\textsf{TW}}
\newcommand{\F}{\textsf{FEAS}}
\newcommand{\QF}{\textsf{QFEAS}}
\newcommand\lasso{\mathit{lasso}}
\newcommand{\seq}[1]{\mkern 2mu\overline{\mkern-2mu#1}}
\newcommand\comp{\raisebox{0.1mm}{\scalebox{0.87}{$\kern 0.1em{\parallel}$}}}
\newcommand*{\inColor}{Red4}
\newcommand*{\outColor}{SeaGreen4}
\newcommand*{\inColored}[1]{\textcolor{\inColor}{#1}}
\newcommand*{\outColored}[1]{\textcolor{\outColor}{#1}}
\newcommand*{\store}[1]{\downarrow{}\hspace{-0.15em} #1}
\newcommand*{\indata}{\star}
\newcommand{\tr}{\triangleleft}
\newcommand*{\stx}[1]{\mathpalette\giusti@stx{#1}}
\newcommand*{\giusti@stx}[2]{%
  \mbox{%
    \medmuskip=\thinmuskip
    \thickmuskip=\thinmuskip
    $\m@th#1#2$%
  }%
}
\newcommand*{\valstore}[3]{#1\stx{[#2 \leftarrow #3]}}
\title{A Generic Solution to Register-bounded Synthesis\texorpdfstring{\\}{ }with an Application to Discrete Orders \texorpdfstring{\scalebox{0.85}{(full version)}}{(full version)}}
\titlerunning{A Generic Solution to Register-bounded Synthesis}
\author{L\'eo Exibard}{Reykjavik University, Iceland}{}{}{}
\author{Emmanuel Filiot}{Universit\'e libre de Bruxelles, Belgium}{}{}{}
\author{Ayrat Khalimov}{Universit\'e libre de Bruxelles, Belgium}{}{}{}
\authorrunning{L.\ Exibard, E.\ Filiot, A.\ Khalimov}  
\keywords{Synthesis, Register Automata, Transducers, Ordered Data Domains} 
\begin{document}

\setlength\abovedisplayskip{4pt}
\setlength\belowdisplayskip{4pt}

\maketitle

\begin{abstract}
  We study synthesis of reactive systems interacting with environments using an infinite data domain.
  A popular formalism for specifying and modelling such systems is register automata and transducers.
  They extend finite-state automata by adding registers to store data values and
  to compare the incoming data values against stored ones.
  Synthesis from nondeterministic or universal register automata is undecidable in general.
  However, its register-bounded variant, where additionally a bound on the number of registers in a sought
  transducer is given, is known to be decidable for universal
  register automata which can compare data for equality, i.e., for data domain $(\bbN,=)$.
  This paper extends the decidability border to the domain $(\bbN,<)$ of natural numbers with linear order.
  Our solution is generic:
  we define a sufficient condition on data domains (regular approximability)
  for decidability of register-bounded synthesis.
  The condition is satisfied by natural data domains like $(\bbN,<)$.
  It allows one to use simple language-theoretic arguments and avoid technical game-theoretic reasoning.
  Further, by defining a generic notion of reducibility between data domains,
  we show the decidability of synthesis in the domain
  $(\bbN^d,<^d)$ of tuples of numbers equipped with
  the component-wise partial order
  and in the domain $(\Sigma^*,\prec)$ of finite strings with the prefix relation.
\end{abstract}

\leo{I suggest adding theorem/definition names in the citations. E.g., when we mention oligomorphic domains, write \cite[Definition~3.9]{Bojanczyk19}.}
\leo{I would leave constants in the signature as we actually need them (in particular 0 in $\bbN$). The change I was suggesting was rather to encode them as predicates, e.g. $0(x)$ holds if and only if $x=0$. Otherwise, we get problems when we have expressions like $p(x_1,...,c,...,x_n)$, because $c$ is stored nowhere.}
\section{Introduction}
\myparagraph{Synthesis}
Reactive synthesis aims at the automatic construction of an interactive system from its specification.
A system is usually modelled as a transducer.
In each step, it reads an input from the environment and produces an output.
In this way, the transducer, reading an infinite sequence of inputs, produces an infinite sequence of outputs.
Specifications are modelled as a language of desirable input-output sequences.
The synthesis problem then asks to automatically construct a transducer whose input-output sequences belong to a given specification.
Traditionally~\cite{PR89a,DBLP:reference/mc/BloemCJ18}, the inputs and outputs have been modelled as letters from a finite alphabet.
This, however, limits the application of synthesis.
Recently researchers have started investigating
synthesis of systems working on data domains~\cite{ESK14,KMB18,EFR21,KK19,DBLP:conf/fossacs/BerardBLS20,EFK21}.

\myparagraph{Automata as specifications}
In the finite-alphabet setting,
specifications are usually given as logical formulas
and a synthesiser performs a series of translations: first, from the formula to an automaton, then from the automaton to a game, and
finally it searches for a winning strategy in the game.
It is the second step, from automata to games, that captures the game-theoretic essence of synthesis,
whereas the first step is an orthogonal problem of finding a convenient logical formalism.
In the context of synthesis over data domains, this first step is problematic as there is no decidable,
and expressive enough, logic having a corresponding automaton model.
For that reason, in this paper we focus on the second step and use automata for specifications.

\myparagraph{Register automata}
A well-studied automata formalism for specifying and modelling data systems are
register automata and transducers~\cite{KF94,NSV01,KZ08,Tze11}.
Register automata extend classical finite-state automata to infinite alphabets
$\bbD$ by introducing a finite number of \emph{registers}. In each
step, the automaton reads a data value $\d\in\bbD$, compares it with the values held in
its registers, then depending on this comparison it decides to store
$\d$ into some of its registers, and finally moves to a successor state. This way, it
builds a sequence of \emph{configurations} (pairs of state and register values)
representing its run on reading a word from $\bbD^\omega$: it is accepting if the
visited states satisfy a certain condition, e.g.\ parity. Transducers are
similar except that in each step they also output the content of one register.

\myparagraph{Universal register automata}
Unlike classical finite-state automata,
the expressive power of register automata depends on whether they are
deterministic, nondeterministic, or universal (a.k.a.\ co-nondeterministic).
Among these, universal register
automata suit synthesis best. First, they can specify request-grant properties:
every requested data shall be eventually outputted.
This is a key property in reactive synthesis, and
in the data setting it can be expressed by a universal register automaton
but not by a nondeterministic one. Furthermore, universal register
automata are closed, in linear time, under intersection.
Hence they allow for succinct conjunction of properties,
which is desirable in synthesis as specifications usually consist of many independent properties.
Finally, in the register-free setting
universal automata are often used to obtain synthesis methods
feasible in practice~\cite{KV05c,SF07,FJR09,DBLP:reference/mc/BloemCJ18}.

\myparagraph{Data domains with order}
Another factor affecting expressivity of register automata
is the data-comparison operators.
Originally, register automata compared data for equality only,
i.e., operated in data domain $(\bbD,=)$~\cite{KF94}.
This limits synthesis applications
as we cannot specify priority arbiters~\cite{BKY00} that should give a resource to a requesting process with the lowest ID.
Such properties require data domains with linear order $<$ (in addition to $=$).
Further, there are data domains with dense order, like $(\bbQ,<)$, and
those with discrete order, like $(\bbN,<)$.
The domain $(\bbQ,<)$ is well-suited for abstracting physical phenomena like changing temperature in a room.
However, for abstracting hardware,
the domain $(\bbN,<)$ suits better as it excludes Zeno-like behaviours
(when a process ID gets infinitely closer to another ID but never reaches it).
The domain $(\bbN,<)$ is also interesting from the theoretical point of view
as it demands new proof techniques.

\myparagraph{Known synthesis results for register automata}
Already for $(\bbD,=)$,
the synthesis problem of register transducers from universal register automata is undecidable~\cite{ESK14,EFR21}.
Decidability is recovered in the deterministic case~\cite{EFR21,EFK21},
but, as argued above, universal automata are more desirable in synthesis.
To circumvent undecidability,
the works~\cite{KMB18,EFR21,KK19} studied \emph{register-bounded} synthesis:
given a universal register automaton \emph{and a bound $k$} on the number of transducer registers,
return a $k$-register transducer realising the automaton or `No' if no such transducer exists.
They showed the decidability of register-bounded for $(\bbD, =)$,
and it is not hard to adapt their techniques to $(\bbQ,<)$ and other oligomorphic domains~\cite{Bojanczyk19},
however the domain $(\bbN,<)$ remained elusive.
Tables \ref{tbl:reg-unbound-results} and \ref{tbl:reg-bound-results} summarise known and new results,
where DRA/NRA/URA stand for deterministic/nondeterministic/universal register automata.

\begin{table}
  \newcommand\DEC{{\color{Green}\cmark}}
  \newcommand\UNDEC{{\color{red}\xmark}}
  \centering
  \begin{minipage}[c]{0.47\textwidth}
    \centering
    \begin{tabular}{c|ccc}
      & $(\bbD,=)$    & $(\bbQ,<)$            & $(\bbN,<)$      \\
      \hline
      DRA    & \DEC\cite{EFK21}  & \DEC\cite{EFK21}   & \DEC\cite{EFK21}    \\
      NRA    & \UNDEC\cite{EFR21}  & \UNDEC   & \UNDEC    \\
      URA    & \UNDEC\cite{EFR21}  & \UNDEC   & \UNDEC
    \end{tabular}%
  \caption{Decidability of register-unconstrainted synthesis.}\label{tbl:reg-unbound-results}
  \end{minipage}
  ~~~~
  \begin{minipage}[c]{0.47\textwidth}
    \centering
    \begin{tabular}{c|ccc}
      & $(\bbD,=)$    & $(\bbQ,<)$            & $(\bbN,<)$      \\
      \hline
      DRA    & \DEC\cite{KMB18}  & \DEC\cite{ExibardThesis}   & \DEC [\sc this paper]    \\
      NRA    & \UNDEC\cite{EFR21}  & \UNDEC   & \UNDEC    \\
      URA    & \DEC\cite{KMB18}  & \DEC\cite{ExibardThesis}   & \DEC [\sc this paper]
    \end{tabular}%
  \caption{Decidability of register-bounded synthesis.}\label{tbl:reg-bound-results}
  \end{minipage}
\end{table}

\myparagraph{Contributions}
We prove that register-bounded synthesis is decidable for $(\bbN,<)$
in time doubly exponential in the number of registers of the specification automaton and of the sought transducer.
Our procedure is effective: it constructs a transducer if one exists.
When the total number of registers is fixed,
it is {\sc ExpTime-c}, matching the complexity of classical (register-free) synthesis.
This result generalises the works of~\cite{EFR21,KK19,KMB18} on $(\bbD,=)$.
We then extend the decidability boundary farther to include the domain $(\bbN^d,<^d)$ of tuples of naturals with the component-wise partial order, and the domain $(\Sigma^*,\prec)$ of strings with the prefix relation.

\myparagraph{Technical contributions}
Our proof technique is generic and greatly simplifies the task of proving new synthesis decidability results by removing the need to reason about synthesis alltogether. We now describe the technique in detail.

The key idea of existing approaches~\cite{KMB18,EFR21,KK19}
is to reduce the register-bounded synthesis problem in a data domain to a two-player Church game with a finite alphabet and an $\omega$-regular winning condition.
In such a game, two players alternately play for an infinite number of rounds.
Adam, modelling the environment, picks a test over the $k$ registers describing how its input data compares with the current content of the registers of a sought transducer.
Eve, modelling the system,
picks a subset of the $k$ registers, meant to store the data, and a register whose value is meant for output.
No data are manipulated in the game.
Infinite plays in the game induce infinite sequences of tests, assignments, and outputs over the $k$ registers,
called \emph{action words}; they are over a \emph{finite} alphabet.
Action words are meant to abstract data words;
an action word is \emph{feasible} if there is at least one data word that satisfies all its tests and assignments.
The reduction ensures that any strategy of Eve winning in the game
can be converted into a $k$-register transducer realising the specification, and vice versa.
To this end,
the game winning condition declares a play to be won by Eve
if
all data words satisfying the action word induced by the play are accepted by the specification automaton.
In particular,
a play whose action word is \emph{un}feasible is won by Eve
as it does not correspond to any environment-system interaction in the data domain.
In the case of $(\bbD,=)$,
such winning conditions are known to be $\omega$-regular~\cite{KMB18,EFR21,KK19}.
However,
in $(\bbN,<)$ the set of feasible action words is not $\omega$-regular~\cite{EFK21},
and neither is the winning condition.
Such winning conditions could be expressed by nondeterministic \wS automata~\cite{BC06},
but games with such objectives are not known to be decidable, to the best of our knowledge.

To overcome the latter obstacle,
we introduce the notion of $\omega$-regularly approximable
(\emph{regapprox}) data domains.
A regapprox data domain has an $\omega$-regular over-approximation of the
set of feasible action words that is exact on the lasso-shaped action
words (of the form $uv^\omega$). Thus, in regapprox domains the
set of feasible lasso-shaped action words is $\omega$-regular.
This allows us to avoid dealing with non-$\omega$-regularity and
reduce synthesis to solving classic $\omega$-regular games.
Our first technical contribution is the generic decidability result:
\begin{quote}%
  \centering
  \emph{For regapprox domains, register-bounded synthesis from URA is decidable}.
\end{quote}
The procedure is constructive: for realisable specifications it outputs a transducer.
Note that all oligomorphic domains~\cite{Bojanczyk19}, e.g.\ $(\bbD,=)$ and $(\bbQ,<)$,
are regapprox,
because their sets of feasible action words are $\omega$-regular,
so our result subsumes works~\cite{EFR21,KK19,KMB18}.
For $(\bbN,<)$,
we construct its over-approximation relying on the result~\cite{EFK21},
and then instantiate the theorem.

There are many domains with discrete order resembling $(\bbN,<)$:
the domain $(\bbZ,<)$ of integers,
the domain $(\bbN^d,<^d)$ of tuples of naturals with the component-wise partial order, and
even the domain $(\Sigma^*,\prec)$ of strings with the prefix relation.
To further simplify decidability proofs on these domains,
we define a natural and generic notion of reducibility between data domains.
Intuitively, a data domain $\bbD$ \emph{reduces} to $\bbD'$ if there is a rational transduction
that relates action words in $\bbD$ and $\bbD'$ while preserving feasibility.
Our second technical contribution is the reduction result:
\begin{quote}%
  \centering
  \emph{If $\bbD$ reduces to $\bbD'$, and $\bbD'$ is regapprox,
  then $\bbD$ is regapprox}.
\end{quote}
This implies that a synthesis procedure for $\bbD'$ can be used to solve synthesis in $\bbD$.
We illustrate the technique by reducing to $(\bbN,<)$
the domains $(\bbN^d,<^d)$ and $(\Sigma^*,\prec)$.
The reduction for $(\Sigma^*,\prec)$ relies on the work~\cite{DD16}.
These reductions entail the decidability of register-bounded synthesis on these domains.

\myparagraph{Related works}
We already mentioned the works~\cite{KMB18,EFR21,KK19,ExibardThesis}
on synthesis of register transducers in domains $(\bbD,=)$ and $(\bbQ,<)$,
and that our result generalises them for the case of URAs.
The paper~\cite{EFK21} studies \emph{Church's synthesis} for DRA specifications,
where a \emph{data} strategy not necessarily with finitely-many states is sought.
However, they show that considering register transducers is sufficient,
with with the number of registers equal that of the specification automaton.
Hence our register-bounded synthesis procedure for URAs can also be used to solve the Church's synthesis problem.

Another formalism for specifications of data systems is that of \emph{variable automata}~\cite{GKS10}.
The paper~\cite{FK20} studies synthesis of symbolic transducers
from specifications given in a fragment of nondeterministic variable automata.
They solve synthesis for data domain $(\bbQ,<)$ and leave the domain $(\bbN,<)$ for future work.
Variable automata are incomparable with register automata,
and their particular fragment cannot express request-grant properties of arbiters
that we believe is desirable in synthesis.

Our proof techniques resemble those from some works on satisfiability of data logics.
Constraint LTL~\cite{DD07} extends Linear Temporal Logic (LTL) by atoms
allowing one to compare data values within the horizon or pre-defined length.
The satisfiability of this logic is decidable for
data domains $(\bbD,=)$, $(\bbQ,<)$, $(\bbN,<)$~\cite{DD07}, and $(\Sigma^*,\prec)$~\cite{DD16}.
Their proof technique relies on the abstraction of data values at different moments
by relations between each other.
For the data domain $(\bbN,<)$,
they additionally prove that considering lasso-shaped witnesses of satisfiability is sufficient.
Our generic synthesis result uses a similar idea by defining regapprox domains.
We note that formulas in Constraint LTL can always be translated into universal register automata
(which are more expressive)~\cite{ST11}.
Hence our approach can be used to solve register-bounded synthesis from Constraint LTL.

The papers~\cite{FKPS19,MB21} suggest a sound/incomplete procedure to
synthesis from Temporal Stream Logic.
This logic extends LTL by adding the atoms
that are either first-order predicate terms or are assignments of variables to a first-order function term.
Similarly, transducers can test data using the predicate terms and update its values by the function terms.
A transducer satisfies a specification if it does so
under \emph{every} interpretation of predicates and functions.
It is possible to model domains like $(\bbD,=)$ and $(\bbQ,<)$ in their formalism,
by encoding the axioms for $>$ and $=$ into specification.
This would give a sound/\emph{in}complete synthesis approach.
Our approach is less general but retains the completeness.

More generally, our notion of regular approximation echoes a general
idea common to verification techniques, for example of programs manipulating data
variables (see, e.g.,~\cite{JPR18}), to abstract concrete behaviours by regular ones. When an
over-approximation is used, it is guaranteed that if the abstract
program satisfies some safety properties, so does the concrete
program. This yields sound algorithm which are not necessarily
complete. Here in the context of register automata, instead, we require that the over-approximation is
exact on lasso-like executions, and show that this implies completeness (for
the synthesis problem).


\section{Synthesis Problem}\label{sec:problem}

Let $\bbN = \{0,1,\dots\}$ denote the set of natural numbers including $0$.

\myparagraph{Data domain and data words} A \emph{data domain} is a tuple $\D = (\bbD, P, C, c_0)$
consisting of an infinite countable set $\bbD$ of \emph{data values},
a finite set $P$ of interpreted \emph{predicates} (predicate names with arities and their interpretations)
which must contain the equality predicate $=$,
a finite set $C \subset \bbD$ of \emph{constants}, and a distinguished
\emph{initialiser} constant $c_0 \in C$.  For example,
$(\bbN,\{<,=\},\{0\},0)$ is the data domain of natural numbers with
the usual interpretation of $<$, $=$, and $0$. In the tuple notation, we often
omit the brackets, as well as the mention of $=$ and of $c_0$ when the initialiser
constant is clear from the context.
E.g., we write $\DN$ for
$(\bbN,\{<,=\},\linebreak\{0\},0)$. Another familiar example is $(\bbZ,<,0)$,
which is the data domain of integers with the usual $<$, $=$, and
$0$. Throughout the paper we assume that the satisfiability
problem of quantifier-free formulas built on the signature $(P,C)$ is
decidable in $\D$, and whenever we state complexity results, the satisfiability problem is additionally assumed to be
decidable in \textsc{PSpace}. This is the case for all data domains considered in this paper.
Finally, \emph{data words} are infinite sequences $\d_0 \d_1 \ldots \in \bbD^\omega$,
and for two sets $I$ and $O$ and a language $L \subseteq (I\cdot O)^\omega$,
we call $I$ and $O$ its \emph{input} and \emph{output} alphabets respectively.\leo{I replaced $\Sigma,\Gamma$ with $I,O$ to highlight that they might not necessarily be finite.} 

\myparagraph{Action words}
Fix a data domain $\D = (\bbD, P, C, c_0)$ and a finite set $R$ of elements called \emph{registers}.
A register \emph{valuation} (over $\D$) is a mapping $\v : R \to \bbD$.
Given a valuation $\v$, a variable $x$ (not necessarily in $R$), and a data value $\d \in \bbD$,
define $\valstore{\v}{x}{\d}$ to be the valuation $R \cup \{x\} \to \bbD$ that maps $x$ to $\d$ and every $r \in R\setminus\{x\}$ to $\v(r)$.
We extend this notation to any finite set $A = \{a_1,\dots,a_n\}$ by letting $\valstore{\v}{A}{\d} = \v\stx{[a_1 \leftarrow \d]} \dots \stx{[a_n \leftarrow \d]}$.

A \emph{test} (over $\D$) is a conjunction $(\wedge)$ of distinct literals over predicates $P$ and constants $C$, encoded as a set of literals $p(x_1,\dots,x_a)$ and $\neg p(x_1,\dots,x_a)$, where $p \in P$, $a$ is the arity of $p$ and $x_1,\dots,x_a \in R \cup C \cup \{\indata\}$.
The symbol $\indata$ is a fresh symbol used as a placeholder for incoming data values.
By convention, $\wedge \varnothing = \top$, and the empty set encodes the test that is always true. Depending on the context, we use the formula or set notation.
A register valuation $\v : R \to \bbD$ and data value $\d \in \bbD$ \emph{satisfy} a test $\varphi$, written $\v,\d \models \varphi$, if $\valstore{\v}{\indata}{\d}$ satisfies $\varphi$, where predicates and constants are interpreted in the data domain $\D$.
A test $\varphi$ is \emph{maximal} if it specifies the relation between all variables and constants wrt.\ the predicates, i.e.\ it is a maximally consistent conjunction of literals: $\varphi = \bigwedge_{\substack{p \in P \\ p \text{ of arity }r}} \bigwedge_{\substack{x_1,\dots,x_r \\ \in R \cup C \cup \{\indata\}}} l_{p,x_1,\dots,x_r}$, where $l_{p,x_1,\dots,x_r} \in \{p(x_1,\dots,x_r), \neg p(x_1,\dots,x_r)\}$. Maximal tests are mutually exclusive: a given valuation cannot satisfy simultaneously two of them. Observe that a test is equivalent to a (possibly exponential) disjunction of maximal ones.
Let $\Tst_R^\D$ denote the set of all possible tests over registers $R$ in domain $\D$, and $\MTst_R^\D \subset \Tst_R^\D$ the subset of maximal ones.
\begin{example*}
  Consider domain $\DN$ and $R = \{r\}$.
  Atomic formulas are $r < \indata$, $\indata = r$, $r < 0$, $\indata = 0$, etc. The test $0<r \wedge r <\indata$ specifies that the content of register $r$ is strictly positive and that the incoming data is greater than it. It is not maximal, since it does not contain the atoms $0 < \indata$, $\neg (\indata = r)$, $\neg (\indata = 0)$, $\neg (r = 0)$. For readability, we write $0 < r < \indata$. 
\end{example*}


An \emph{assignment} is a set $\asgn \subseteq R$ of registers
meant to store the current input data value.
Let $\Asgn_R = 2^R$ denote the set of all possible assignments.
An \emph{action} is a pair $(\tst,\asgn) \in \Tst_R\x\Asgn_R$.
We now describe how valuations are updated:
given a valuation $\v$, a data value $\d$, a test $\tst$ and an assignment $\asgn$,
we say that the valuation $\v'$ is the \emph{successor} of $\v$ following action $(\tst,\asgn)$ on reading $\d$, written $\v \trans{\d,\tst,\asgn} \v'$,
if
the data value satisfies the test, i.e.\ $\v,\d \models \tst$, and
$\v' = \valstore{\v}{\asgn}{\d}$.

An \emph{automaton action word}, or simply \emph{action word}, is an infinite sequence of actions $\a = (\tst_0,\asgn_0)(\tst_1,\asgn_1)\ldots\in(\Tst_R\x\Asgn_R)^\omega$.
It is \emph{feasible} by a sequence of valuation-data pairs $(\v_0, \d_0) (\v_1, \d_1) \ldots$
if $\v_0: r \in R \mapsto c_0$,
i.e.\ $\v_0$ maps every $r\in R$ to $c_0$,
and for all $i$: $\v_{i} \trans{\d_i,\tst_i,\asgn_i} \v_{i+1}$.
We then say that the data word $\d_0 \d_1 \dots$ is \emph{compatible} with $\a$.
Let $\AW_R^\D$ denote the set of action words over $R$ in $\D$, and $\F_R^\D$ the subset of feasible ones. We may write either $\AW_R$, or $\AW^\D$ or just $\AW$ when $\D$, $R$ or both are clear from the context, similarly for $\F$. \label{page:def:feas}

\begin{example*}
Consider domain $\DN$ and $R = \{r\}$.
For $r \in R$, the assignment $\{r\}$ is denoted $\da r$.
The action word $(0<\indata,\da r)(\indata<r,\da r)^\omega$ is unfeasible in $\DN$,
because it requires having an infinite chain of strictly decreasing values, which is not possible since $\bbN$ is well-founded.
The same action word can be interpreted in $\DZ$ and in $\DQ$ and there it is feasible,
as well as in $\DQP$ since $\bbQP$ is dense.
\end{example*}

\myparagraph{Register automata}
A register automaton over data domain $\D$ is a tuple $S = (Q, q_0, R, \delta, \alpha)$, where $Q$ is a finite set of \emph{states} containing the \emph{initial} state $q_0$, $R$ is a finite set of \emph{registers}, $\delta \subseteq Q \x \Tst_R \x \Asgn_R \x Q$ is a \emph{transition relation}, and $\alpha : Q \to \{1,...,c\}$ is a \emph{priority function} where $c$ is the priority \emph{index}. A \emph{configuration} of $S$ is a pair $(p,\v)\in Q\x\bbD^R$; it is \emph{initial} if $p = q_0$ and $\v: r\in R \mapsto c_0$. The configuration $(q,\v')$ is a \emph{successor} of $(p,\v)$ on reading data value $\d \in \bbD$ and taking transition $p' \trans{\tst,\asgn} q' \in \delta$,
written $(p,\v) \trans[S]{\d,\tst,\asgn} (q,\v')$ or simply $(p,\v) \trans[S]{\d} (q,\v')$,
if $p = p'$, $q = q'$ and $\v \trans{\d,\tst,\asgn} \v'$, i.e. $\v,\d \models \tst$ and $\v' = \valstore{\v}{\asgn}{\d}$.

A \emph{run} of $S$ over a data word $\d_0 \d_1 \ldots$ is a sequence of configurations $\rho = (q_0,\v_0) (q_1,\v_1)\ldots$ such that $(q_0,\v_0)$ is initial and for every $i$, $(q_{i+1},\v_{i+1})$ is a successor of $(q_i,\v_i)$ on reading $\d_i$, on taking some transition $q_i \trans{\tst_i,\asgn_i} q_{i+1} \in \delta$. We then say that the automaton action word $(\tst_0,\asgn_0)(\tst_1,\asgn_1)\ldots$ \emph{labels} $\rho$; note that it is feasible by $\v_0\d_0\v_1\d_1\ldots$.
The run $\rho$ is \emph{accepting} if
the maximal priority appearing infinitely often in $\alpha(q_0)\alpha(q_1) \ldots$ is even,
otherwise it is \emph{rejecting}.
A data word may have several runs of $S$.
For \emph{universal} register automata, abbreviated URA,
a word is \emph{accepted} if all its runs are accepting;
for \emph{nondeterministic} automata, there should be at least one accepting run.
The set of all data words over $\D$ accepted by $S$ is called the
\emph{language} of $S$ and denoted $L(S)$. We may write $L_\D(S)$ to
emphasise that $L(S)$ is defined over $\D$.

A \emph{finite (parity) automaton} (without registers) is a tuple $(\Sigma,Q,q_0,\delta,\alpha)$,
where $\Sigma$ is a finite alphabet,
$\delta \subseteq Q \x\Sigma \x Q$,
and the definition of runs, accepted words, and language is standard.
Such automata operate on words from $\Sigma^\omega$.

\parit{Syntactical language of a register automaton} A register automaton $S = (Q,q_0,R,\delta,\alpha)$ can be treated syntactically, it then
induces a universal finite automaton $\Ss = (\Sigma,Q,q_0,\delta,\alpha)$
with $\Sigma = \Tst_R\x\Asgn_R$. Note that since $\Ss$ is universal, words that have no run are accepted.
Notice that
the language of $\Ss$ may contain action words which are not feasible.
\begin{example*}\label{page:example:UpWeGo}
  Consider the automaton of Figure~\ref{fig:UpWeGo}.
  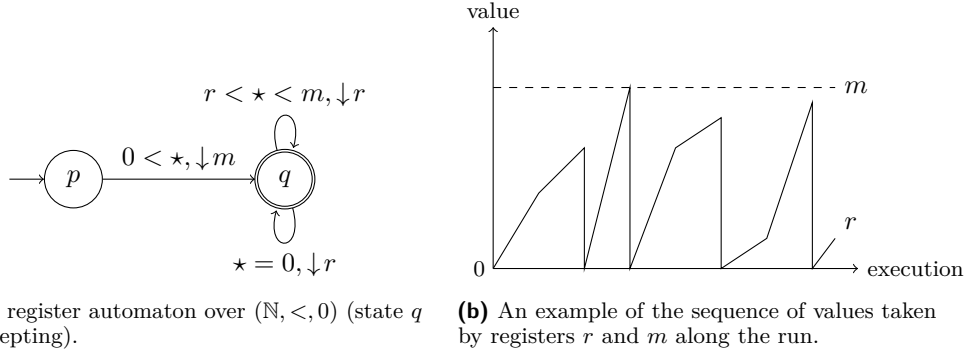
\begin{figure}[tb]
    \centering
    \begin{subfigure}[b]{0.45\textwidth}
      \centering
      \begin{tikzpicture}[->,node distance=2cm,every state/.style={inner sep=1.7mm, minimum size=5mm}]
      \node[state, initial]               (0) {$p$};
      \node[state, right= of 0,accepting] (1) {$q$};

      \path (0) edge             node[above] {$0 < \indata, \store{m}$} (1);
      \path (1) edge[loop above] node[above] {$ r < \indata < m, \store{r}$} (1);
      \path (1) edge[loop below] node[below] {$ \indata=0, \store{r}$} (1);
    \end{tikzpicture}
    \caption{A register automaton over $(\bbN,<,0)$ (state $q$ is accepting).}
    \label{fig:UpWeGo}
    \end{subfigure} \quad
    \begin{subfigure}[b]{0.45\textwidth}
    \begin{tikzpicture}[xscale=0.6,yscale=0.4]
      \node (0) at (-0.3,0) {\footnotesize 0};
      \draw[->] (0,0) -- (0,8) node[above] {\footnotesize value};
      \draw[->] (0,0) -- (8,0) node[right] {\footnotesize execution};
      \draw[dashed] (0,6) -- (7.5,6) node[right] {$m$};
      \draw (0,0) -- (1,2.5) -- (2,4) -- (2,0) -- (3,6) -- (3,0) -- (4,4) -- (5,5) -- (5,0) -- (6,1) -- (7,5.5) -- (7,0) -- (7.5,1) node[above right] {$r$};
    \end{tikzpicture}%
    \caption{An example of the sequence of values taken by registers $r$ and $m$ along the run.}
    \label{fig:AW_N}
  \end{subfigure}
  \caption{A register automaton whose action words do not form an $\omega$-regular language.}
  \label{fig:atm_AW_N}
  \end{figure}
  Its syntactical language is
  $$
    (0<\indata,\da m) \big( (r<\indata< m,\da r) \mid (0 = \indata, \da r) \big)^\omega
  $$%
  which includes not only feasible but also unfeasible action words, e.g.\ $(0<\indata,\da m) (r<\indata< m,\da r)^\omega$.
  The feasible accepted action words have the form
  $$
    (0<\indata,\da m) \prod_{i=1}^\infty\big((r<\indata< m,\da r)^{n_i} (0 = \indata, \da r)\big)
  $$%
  such that the numbers $(n_i)_i$ are uniformly bounded by some value;
  the bound corresponds to the first read data value.
  This language is not $\omega$-regular but an $\wB$-language~\cite{BC06}.
\end{example*}

\myparagraph{Register transducers}\label{sec:def:transducers}
A \emph{$k$-register transducer} is a tuple $T = (Q,q_0,R,\delta)$,
where $Q$, $q$, $R$ ($|R|=k$) are as in automata but
$\delta : Q\x\MTst\  \to  \Asgn \x R \x Q$. 
Note that $\delta$ is a total function;
moreover, since we restrict to maximal tests,
exactly one test holds per incoming data value,
so the transducers are deterministic and complete.
A configuration is a pair $(p,\v) \in Q \x \bbD^R$.
From configuration $(p,\v)$,
on reading $\d \in \bbD$,
the transducer takes the unique transition $p \trans{\tst,\asgn \mid r} q$ such that $\v,\d \models \tst$, updates its configuration to $(q,\v')$ where $\v \trans{\d,\tst,\asgn} \v'$, and outputs the value $\v'(r)$. Note that the output is produced \emph{after} assignment. We then write $(p,\v) \trans[T]{\d,\tst,\asgn \mid r, \v'(r)} (q,\v')$, or simply $(p,\v) \trans[T]{\d \mid \v'(r)} (q,\v')$.
A \emph{run} of $T$ on an input data word $\dI_0 \dI_1 \ldots$ is a sequence
$(q_0,\v_0) (q_1,\v_1)\ldots$ such that $(q_0,\v_0)$ is initial and for all $i \geq 0$, $(q_i,\v_i) \trans{\dI_i, \tst_i,\asgn_i \mid r_i, \dO_i} (q_{i+1},\v_{i+1})$ for some unique $\dO_i \in \bbD$.
The sequence $\dO_0\dO_1\ldots$ is the \emph{output word} of $T$ on reading $\dI_0\dI_1\ldots$;
since the transducers are deterministic and have a run on every input word,
the output word is uniquely defined.
The sequence $\dI_0 \dO_0 \dI_1 \dO_1 \ldots$ is called the \emph{input-output word}.
We then say that the \emph{transducer action word} $\tst_0 (\asgn_0,r_0) \tst_1 (\asgn_1,r_1) \ldots \in \big(\MTst{\cdot}(\Asgn\x R)\big)^\omega$ is \emph{feasible} by $(\v_0, \dI_0, \dO_0) (\v_1, \dI_1, \dO_1) \ldots$. It is naturally associated with the automaton action word $(\tst_0,\asgn_0) (\indata = r_0,\varnothing) (\tst_1,\asgn_1) (\indata=r_1,\varnothing)\ldots$, which is then feasible by $\v_0 \dI_0 \v_1 \dO_0 \v_1 \dI_1 \v_2 \dO_1 \dots$. The set of all transducer action words over $R$ in data domain $\D$ is denoted by $\TW_R^\D$.
The \emph{language} $L(T)$ consists of all input-output words of $T$.

A \emph{finite transducer} is a standard Mealy machine:
it is a tuple $(\Sigma,\Gamma,Q,q_0,\delta)$,
where $\Sigma$ and $\Gamma$ are finite input and output alphabets,
$\delta : Q \x \Sigma \to \Gamma \x Q$,
and the definition of language is standard.
Treating a register transducer $T$ syntactically gives a finite transducer denoted $\Ts$
of the same structure as $T$ with $\Sigma =  \MTst_R$ and $\Gamma =  \Asgn_R\x R$.

\myparagraph{Synthesis problem}
Fix a data domain $(\bbD,P,C,c_0)$.
A register transducer $T$ \emph{realises} a register automaton $S$ if $L(T) \subseteq L(S)$.
The \emph{register-bounded synthesis problem} is:
\li
\- input: $k \in \bbN$ and a URA $S$;
\- output: yes iff there exists a $k$-register transducer which realises $S$.
\il
In this paper, when the synthesis problem is decidable, we
are able to synthesise, i.e., effectively construct,  a transducer
realising the specification.
We now make two remarks.
First, notice that the number of transducer states is finite but unconstrained.
Thus, register-bounded synthesis generalises classical register-free synthesis from
(data-free) $\omega$-regular specifications.
Second, observe that
transducers are complete, and therefore produce an ouptput word on every input word.
Thus,
a specification for which some input words do not have an associated output word is unrealisable.
It is known that in the finite-alphabet case,
the refined synthesis problem of good-enough synthesis~\cite{AK20},
which requires a transducer to react only to inputs that belong to the domain of the specification,
is still decidable.
However,
the good-enough register-bounded synthesis is undecidable~\cite[Chapter~8]{ExibardThesis}.

\begin{example*}
  We illustrate the synthesis problem by describing a specification,
  its URA, and a register transducer realising it.

  Let us start with the specification of priority arbiters.
  Such an arbiter reads an ID of a process requesting the resource,
  and outputs an ID of a process to whom the resource is granted.
  The specification requires that every requesting process
  is
  either acknowledged consecutively twice on the output,
  or this is done for a process of higher ID.
  We model the specification using the URA over $(\bbN,<,0)$ with a single register
  from Figure~\ref{fig:req_grant_spec}.
  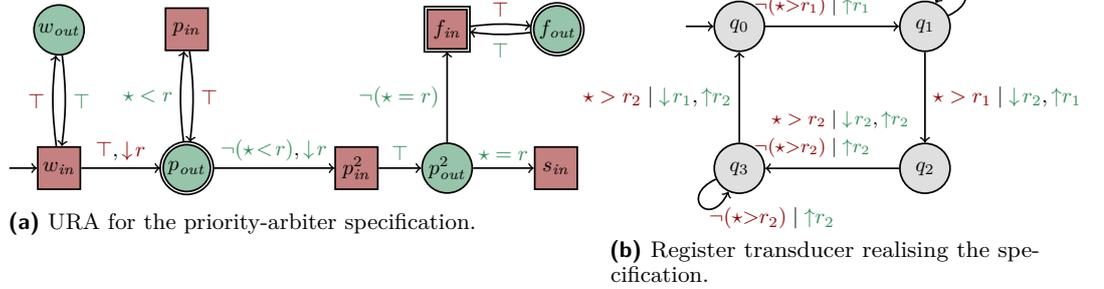
\begin{figure}[htb]
    \centering
    \begin{subfigure}[c]{0.54\textwidth}
      \centering
  \vspace{-2mm}
  \centering
  \scalebox{0.8}{
    \begin{tikzpicture}[->, thick, trim left=-7mm]  
    \node[input state, initial]                        (ii) {$w_\mathit{in}$};
    \node[output state, above=1.5cm of ii]             (io) {$w_\mathit{out}$};
    \node[output state, right= 1.3cm of ii, accepting] (po) {$p_\mathit{out}$};
    \node[input state, above=1.5cm of po]              (pi) {$p_\mathit{in}$};
    \node[input state, right= 1.99cm of po]             (si) {$p_\mathit{in}^2$};
    \node[output state, right= 0.7cm of si]             (sout) {$p_\mathit{out}^2$};

    \node[input state, right= 1cm of sout]             (success) {$s_\mathit{in}$};

    \node[input state, above= 1.5cm of sout,accepting]             (fin) {$f_\mathit{in}$};
    \node[output state, right= 1cm of fin,accepting]             (fout) {$f_\mathit{out}$};

    \path (ii) edge[bend left=7] node[left]  {$\inColored{\top}$}      (io);
    \path (io) edge[bend left=7] node[right] {$\outColored{\top}$}     (ii);
    \path (ii) edge            node[above] {$\inColored{\top}, \inColored{\store{r}}$}   (po);
    \path (po) edge[bend left=7] node[left]  {$\outColored{\indata < r}$} (pi);
    \path (pi) edge[bend left=7] node[right] {$\inColored{\top}$}      (po);
      \path (po) edge     node[above] {$\outColored{\neg (\indata\!<\!r)},\outColored{\store{r}}$}  (si);
    \path (si) edge     node[above] {$\outColored{\top}$}  (sout);
    \path (sout) edge   node[above] {$\outColored{\indata = r}$}  (success);

      \path (sout) edge   node[left] {$\outColored{\neg(\indata = r)}$}  (fin);

    \path (fout) edge[bend left=7] node[below] {$\outColored{\top}$}  (fin);
    \path (fin) edge[bend left=7]  node[above] {$\inColored{\top}$}  (fout);

    \end{tikzpicture}
    }
      \caption{\label{fig:req_grant_spec}%
        URA for the priority-arbiter specification.}
    \end{subfigure}
    ~
    \begin{subfigure}[c]{0.4\textwidth}
      \centering
  \centering
  \raisebox{-5mm}{\scalebox{0.8}{
    \begin{tikzpicture}[->, thick,trim left=-20mm]
    \node[trans state, initial]                (q0) {$q_0$};
    \node[trans state, right=2.2cm of q0]      (q1) {$q_1$};
    \node[trans state, below=1.5cm of q1]      (q2) {$q_2$};
    \node[trans state, left=2.2cm of q2]       (q3) {$q_3$};

    \path
      (q0) edge
        node[above]
        {
          \specialcellC
          {
            $\ \ \,\inColored{\indata > r_1} \mid
             \outColored{\store{r_1}}, \outColored{{\uparrow}r_1}$\\
            $\hspace{-6.5mm}\inColored{\neg (\indata{>}r_1)} \mid \outColored{{\uparrow}r_1}$
          }
        }
      (q1);

    \path[out=25,in=65,looseness=6]
      (q1) edge
        node[right]
        {
          $\inColored{\neg(\indata{>}r_1)} \mid \outColored{{\uparrow}r_1}$
        }
      (q1);

    \path
      (q1) edge
        node[right]
          {
            $\inColored{\indata > r_1} \mid
             \outColored{\store{r_2}}, \outColored{{\uparrow}r_1}$
          }
      (q2);

    \path
      (q2) edge
        node[above]
        {
          \specialcellC
          {
            $\ \ \,\inColored{\indata > r_2} \mid
              \outColored{\store{r_2}}, \outColored{{\uparrow}r_2}$\\
            $\hspace{-6.5mm}\inColored{\neg(\indata{>}r_2)}\mid \outColored{{\uparrow}r_2}$
          }
        }
      (q3);

    \path[out=205,in=245,looseness=6]
      (q3) edge
        node[right,yshift=-2mm]
        {
          $\inColored{\neg(\indata{>}r_2)} \mid \outColored{{\uparrow}r_2}$
        }
      (q3);

    \path
      (q3) edge
      node[left]
        {
          $\inColored{\indata > r_2} \mid
           \outColored{\store{r_1}}, \outColored{{\uparrow}r_2}$
        }
      (q0);
    \end{tikzpicture}
  }
  }
  \vspace{-2mm}
      \caption{\label{fig:transd_req_grant}%
        Register transducer realising the specification.}
    \end{subfigure}
    \caption{A URA specification and a transducer implementing it.}
  \end{figure}
\end{example*}
  The automaton reads words interleaving between arbiter data input and output,
  so its states are partitioned into box states (for reading input) and circle states (for reading output).
  The double-circle states are rejecting and can be visited only finitely often.
  Thus, a run looping in wait states $w_\mathit{in}$ and $w_\mathit{out}$ is accepting.
  Branching is universal,
  hence some run always loops around $w_\mathit{in}$ and $w_\mathit{out}$.
  On reading an ID of a requesting process,
  a copy of the automaton moves from $w_\mathit{in}$ to a pending state $p_\mathit{out}$
  while storing the ID into register $r$.
  It stays in states $p_\mathit{out}$ and $p_\mathit{in}$ as long as the request is not acknowledged,
  and such an infinite run is rejecting.
  If the request is eventually acknowledged (transitions from $p_\mathit{out}$ to a sink state $s_\mathit{in}$),
  the run dies, so it is accepting.
  If a run reaches the failure state $f_\mathit{in}$, it is rejecting.

  Figure~\ref{fig:transd_req_grant} depicts a transducer with two registers $r_1$ and $r_2$
  realising the above specification. On the left of the vertical bar are the tests over the inputs received by the transducer (in red), and on the right is the output action performed by the transducer (in green).
  For example, from state $q_0$ to $q_1$,
  if the input data $\d$ is larger than the data stored in register $r_1$,
  the transducer stores it into $r_1$, and outputs the content of $r_1$.
  The transducer uses one register to store the maximal value seen so far,
  while outputting the content of the other register,
  and the roles of these registers interchange as the transducer transits along the states.
  Thus, the instance of register-bounded synthesis with the described URA and $k=2$ has a positive answer.
  However, when $k=1$ the answer is negative.

\section{Sufficient Condition for Decidable Synthesis for URA}
\label{sec:dec-recipe}

In this section, we first show a reduction from register-bounded
synthesis to (register-free) finite-alphabet synthesis. In the following, we fix a data domain $\D$. Given a specification $S$ (as a URA over $\D$) and a bound $k$, we show how to
construct a finite-alphabet specification $\Wf$ on action words over $k$ registers,
which is realisable by a finite-alphabet
transducer iff $S$ is realisable by a $k$-register transducer
(Lemma~\ref{lem:w-wf}). The main idea is to see the actions of the URA
and of the sought $k$-register transducer as finite-alphabet letters.
In particular,
the specification $\Wf$
accepts a transducer action word $\ak$
iff
every action word $\aS$ of the specification $S$,
such that both $\ak$ and $\aS$ are feasible by the same data word,
is accepted by $\Ss$.
One can compose automata and transducer action words through a form of parallel product, which allows to talk about their joint feasibility.
Then, in general,
$\Wf$ is not necessarily $\omega$-regular, and in a second step,
we provide sufficient conditions on the data domain making
synthesis wrt.\ $\Wf$ decidable, namely, that it can be
under-approximated by an $\omega$-regular language which coincides with $W_{S,k}^f$
over lasso words (Section~\ref{subsec:gen}). We obtain a general
decidability result for data domains having this property
(Theorem~\ref{thm:gendec}).
We then instantiate this result for data domain
$(\mathbb{N},<,0)$ (Section~\ref{subsec:overn}).

In the following, we fix a URA $S$ with registers $R_S$ and a disjoint set $R_k$ consisting of $k$ registers, and let $R = R_S \uplus R_k$.
Given a transducer action word
$\ak = \tst^k_0\,(\asgn^k_0,r^k_0) \ldots \in \TW_{R_k}^\D$
and
an automaton action word
$\aS = (\tstSI_0,\asgnSI_0) (\tstSO_0,\asgnSO_0) \ldots \in \AW_{R_S}^\D$,
the product $\ak \otimes \aS$ of $\ak$ and $\aS$ is the automaton action word over registers $R$ defined as $(\tst^k_0 \wedge \tstSI_0, \asgn^k_0 \cup \asgnSI_0) ((\indata = r^k_0) \wedge \tstSO_0,\asgnSO_0) \dots$,
which is essentially the parallel product of $\aS$ and of the automaton word associated with $\ak$.
We now show how to abstract a data specification given as URA $S$ with registers $R_S$
by a \emph{finite-alphabet} specification over $k$-register transducer action words.
Let $\F_R^\D$ be the set of automata action words over $R$ feasible in $\D$,
then we define
$$
  \Wf =
    \big\{
      \a_k \in \TW_{R_k} \mid
      \forall \a_S \in \AW_{R_S}\:
      \ak \otimes \aS \in \F_R^\D ~\Impl~ \aS \in L(\Ss)
    \big\}.
$$%
Thus,
$\Wf$ \emph{rejects} a feasible transducer action word $\ak$ iff there is an automaton action word $\aS$ feasible by the same data word as $\ak$ and rejected by $S$.

\begin{lemma}\label{lem:w-wf}
  These two are equivalent:
  \li
  \- a URA $S$ is realisable by a $k$-register transducer,
  \- $\Wf$ is realisable (by a finite-alphabet transducer).
  \il
\end{lemma}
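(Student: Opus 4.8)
The plan is to prove the equivalence by passing through the intermediate specification $\Wf$ and exploiting the composition machinery of Lemma~\ref{lem:feas-comp}. I would argue both directions by converting transducers on one side into transducers of the same state structure on the other side, keeping the register set $R_k$ fixed and reading off the action words from the runs.

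\textbf{From a $k$-register transducer $T$ realising $S$ to a finite-alphabet transducer realising $\Wf$.} Given $T = (Q,q_0,R_k,\delta)$ with $\delta\colon Q\x\Tst_{R_k}\to\Asgn_{R_k}\x R_k\x Q$, treat it syntactically as $\Ts$, a finite-alphabet transducer with input alphabet $\Tst_{R_k}$ and output alphabet $\Asgn_{R_k}\x R_k$. I claim $\Ts$ realises $\Wf$, i.e.\ $L(\Ts)\subseteq\Wf$. Take any input-output word of $\Ts$; it is exactly a transducer action word $\ak$ produced by $T$ on some input data word sequence of tests. To show $\ak\in\Wf$, fix an arbitrary $\aS$ with $\Comp(\ak,\aS)\cap\F_R\neq\emptyset$; I must show $\aS\in L(\Ss)$. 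By Lemma~\ref{lem:feas-comp}, since $\ak\otimes\aS$ has a feasible element, there is a data word $w = \dI_0\dO_0\dI_1\dO_1\ldots$ such that both $\ak$ and $\aS$ are feasible by $w$. Because $\ak$ is feasible by $w$ and $T$ is deterministic and total, the run of $T$ on $\dI_0\dI_1\ldots$ produces output $\dO_0\dO_1\ldots$, so $w\in L(T)$. Since $L(T)\subseteq L(S)$, all runs of $S$ on $w$ are accepting; in particular the run of $S$ whose action word is $\aS$ (which exists because $\aS$ is feasible by $w$ and its tests/assignments follow $S$'s transitions) is accepting, so $\aS\in L(\Ss)$. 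Hence $\ak\in\Wf$.

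\textbf{From a finite-alphabet transducer realising $\Wf$ to a $k$-register transducer realising $S$.} Conversely, let $M = (\Tst_{R_k},\Asgn_{R_k}\x R_k,Q,q_0,\delta)$ be a finite-alphabet transducer with $L(M)\subseteq\Wf$; reinterpret it syntactically as a $k$-register transducer $T$ with the same structure (here we use that $M$'s input alphabet is exactly $\Tst_{R_k}$, so $\delta$ is already a total function of a state and a test, matching the register-transducer signature). I claim $L(T)\subseteq L(S)$. Take any $w = \dI_0\dO_0\ldots\in L(T)$; its run in $T$ induces a transducer action word $\ak$ feasible by $w$, and $\ak\in L(M)\subseteq\Wf$. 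Now consider any run $r$ of $S$ on $w$; it has an associated automaton action word $\aS$ feasible by $w$. Then $\ak$ and $\aS$ are both feasible by the same data word $w$, so by Lemma~\ref{lem:feas-comp} they have a feasible composition, i.e.\ $\Comp(\ak,\aS)\cap\F_R\neq\emptyset$. Since $\ak\in\Wf$, the implication in the definition of $\Wf$ gives $\aS\in L(\Ss)$, meaning the run $r$ (whose state sequence is that read by $\Ss$ on $\aS$) is accepting. As $r$ was arbitrary, $w\in L(S)$. Hence $L(T)\subseteq L(S)$, and since $T$ has $k$ registers it witnesses realisability of $S$.

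The routine but slightly delicate bookkeeping — and the part I would write out most carefully — is the correspondence between a run of a register automaton/transducer \emph{on a data word} and its associated \emph{action word}, in both directions: given a run one extracts the sequence of tests actually satisfied (using that tests are maximally consistent, so the test satisfied at each step is uniquely determined by the valuation and the incoming data) and the assignments used, and conversely given a feasible action word one reconstructs a genuine run by following $\delta$. The only real subtlety is making sure that "feasible by the same data word $w$" is precisely the hypothesis needed to invoke Lemma~\ref{lem:feas-comp}, and that in the forward direction the output-equality literal $*=r^k_j$ built into compositions is automatically consistent with $w$ because $\dO_j = \v_{j+1}(r^k_j)$ by definition of $T$'s run; I expect this alignment of the output register value to be the main place where one must be explicit rather than hand-wavy.
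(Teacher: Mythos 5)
Your proof is correct and follows essentially the same route as the paper's: identify the finite-alphabet transducer with the syntactic version $\Ts$ of a $k$-register transducer, and use Lemma~\ref{lem:feas-comp} to translate between joint feasibility by a data word and non-emptiness of $\Comp(\ak,\aS)\cap\F_R$, lifting runs between the syntactic and data levels. The only difference is presentational: the paper argues both directions by contradiction via a rejecting run, whereas you argue directly, so in the forward direction you should say that \emph{every} run of $\Ss$ on $\aS$ lifts (via the feasibility valuations) to a run of $S$ on $w$ and is therefore accepting — your singular ``the run'' glosses over the fact that $\aS$ may label several runs of $\Ss$ — but this is a trivial fix, not a gap.
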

\begin{proof}
  $\Impl$: Assume that $S$ is realisable by a register transducer $T$, i.e. $L_\D(T) \subseteq L_\D(S)$. Let $\ak \in L(\Ts)$, and let $\aS \in \AW_{R_S}$ such that $\ak \otimes \aS \in \F_R^\D$. Then, $\ak \otimes \aS$ is feasible by some input-output data word $w = \dI_0 \dO_0 \dI_1 \dO_1 \dots$. By definition of the product, both $\ak$ and $\aS$ are feasible by $w$. Since $L_\D(T) \subseteq L_\D(S)$, if $\aS$ labels a run of $S$ on $w$, it means that it is accepting otherwise $w \notin L_\D(S)$ since $S$ is a universal automaton. Thus, $\aS \in L(\Ss)$.

  $\Implied$: Conversely, assume that $\Wf$ is realisable by some finite transducer $M$, and let $T$ be the associated register transducer, i.e. such that $\Ts = M$. Let $w \in L_\D(T)$ and let $\ak$ be the action word labelling the run of $T$ on $w$. Let $\aS$ be an action word labelling a run of $S$ on $w$ if it exists (it might be that $w$ is accepted by $S$ by having no run on it). Then, $\ak \otimes \aS$ is feasible by $w$. By definition of $\Wf$, it means that $\aS \in L(\Ss)$, so $\aS$ labels an accepting run of $S$ on $w$. Overall, all runs of $S$ on $w$ are accepting, so $w \in L_\D(S)$. Thus, $L_\D(T) \subseteq L_\D(S)$, i.e. $T$ realises $S$.
\end{proof}

\subsection{General Decidability Result}\label{subsec:gen}

In $(\bbN,<,0)$, $\Wf$ is not $\omega$-regular in general. 
To overcome this obstacle,
we define the notion of $\omega$-regularly approximable data domains.
Such domains have an $\omega$-regular equi-realisable subset of $\Wf$.

Let $\lasso_R$ be the set of lasso-shaped\footnote{A word $w$ is \emph{lasso-shaped} (or regular, or ultimately periodic) if it is of the form $w = u v^\omega$ for some finite words $u$ and $v$.} action words over a given set of registers $R$;
we write $\lasso$ when $R$ is clear.
A data domain $\bbD$ is \emph{$\omega$-regularly approximable} (\emph{regapprox})
if for every $R$
there exists an $\omega$-regular language $\QF_R \subseteq (\Tst_R\x\Asgn_R)^\omega$
satisfying
$$
  \QF_R \cap\lasso_R ~\subseteq~ \F_R ~\subseteq~ \QF_R
$$%
and recognisable by a nondeterministic B\"uchi automaton
that can be effectively constructed given $R$.
The definition implies that $\F_R$ and $\QF_R$ coincide on lasso words.
Such a set $\QF_R$ is called \emph{regular approximation} and written as $\QF$ when $R$ is clear.

\begin{example*}
  The data domains $(\bbD,=)$ and $(\bbQ,<)$ are regapprox
  because their sets $\F_R$ for every $R$ are $\omega$-regular,
  so there is no need to approximate them.
  On these domains, to check whether a given action word is feasible,
  one can track the relations between the registers and check if the read tests are consistent with these relations.
  For instance,
  if $r_1 < r_2$ but we read the test $* = r_1 = r_2$, then the action word is unfeasible.

  The domain $(\bbN,<,0)$ is also regapprox.
  Here, it is not sufficient to track the relations between the registers.
  We also need to ensure that between any two stored data values only a bounded number of different values is inserted along the action word.
  (Recall the example on page~\pageref{page:example:UpWeGo} with Figure~\ref{fig:UpWeGo}.)
  However, when an action word is lasso-shaped,
  it suffices to check the absence of an \emph{infinite} number of such insertions.
  The latter can be checked by an $\omega$-regular automaton,
  which allows for proving the regapproximability of $(\bbN,<,0)$.

  Finally, consider the data domain $(\bbN,\{S,=\},\{0\},0)$,
  where $S$ is the successor relation, i.e.\ $S(a,b)$ holds iff $a = b+1$.
  This domain is not regapprox.
  Intuitively, this is because the domain allows for counting,
  which enables non $\omega$-regular phenomena even in lasso words.
  We prove this by contradiction.
  Consider the following $\omega$-regular language of action words over a single register $r$:
  $$
    L ~=~ \big\{\big(S(*,r), \da r\big)^n \big(S(r,*), \da r\big)^m \big(*=0=r, \emptyset\big)^\omega \mid n,m\in\bbN \big\},
  $$
  i.e.\ the value in $r$ is incremented $n$ times, then decremented $m$ times,
  then compared to zero and not updated.
  $L$ contains feasible as well as unfeasible action words.
  Every feasible word of $L$ has $n=m$, hence $\F\cap L$ is not $\omega$-regular.
  Moreover, every word of $L$ is a lasso, thus $L \cap \lasso = L$.
  Let us assume that the data domain is regapprox, witnessed by $\QF$ for $R = \{r\}$.
  Since $\QF \cap \lasso = \F \cap \lasso$ by definition,
  we get
  $$
    \QF \cap L = \QF \cap \lasso \cap L = \F \cap \lasso \cap L = \F \cap L.
  $$
  The language $\QF\cap L$ is $\omega$-regular, but $\F \cap L$ is not. Contradiction.
  Therefore $(\bbN,\{S,=\},\{0\},0)$ is not regapprox.
  \qed
\end{example*}

Given a URA $S$ with registers $R_S$ and $k$, we define
$$
  \Wqf =
    \big\{
      \a_k \mid
      \forall \a_S\:
      \ak \otimes \aS \in \QF_R ~\Impl~ \aS \in L(\Ss)
    \big\},
$$%
where $R = R_S \uplus R_k$.
The definition of $\Wqf$ differs from $\Wf$ only in using $\QF_R$ instead of $\F_R$.
Since $\F_R\subseteq\QF_R$, we have $\Wqf\subseteq \Wf$.

We now show that $\Wqf$ is $\omega$-regular
(which essentially follows from $\omega$-regularity of $\QF$ and $\Ss$),
and estimate the size of an automaton recognising $\Wqf$ and the time needed to construct it.
For that we use the following terminology for functions of asymptotic growth:
a function is $poly(t)$ if it is $O(t^\kappa)$,
$exp(t)$ if it is $O(2^{t^\kappa})$, and
$2exp(t)$ if it is $O(2^{2^{t^\kappa}})$,
for a constant $\kappa\in\bbN$.
When $poly$, $exp$, and $2exp$ are used with several arguments, the maximal among them shall be taken for $t$. The construction and complexity analysis rely on standard automata techniques; we refer to Appendix~\ref{app:lem:Wqf-is-regular} for details.
\begin{lemma}\label{lem:Wqf-is-regular}
  Let $S$ be a URA and let $k \geq 1$. Then, $\Wqf$ is $\omega$-regular.
  Moreover,
  $\Wqf$ is recognisable by a universal co-B\"uchi automaton with $O(2^kNnc)$ many states
  that can be constructed in time $poly(N,n,exp(r,k))$,
  where $n$, $r$, and $c$ are the number of states, registers, and priorities in $S$,
  and $N$ is the number of states in a nondeterministic B\"uchi automaton
  recognising $\QF_{R_S\uplus R_k}$.
\end{lemma}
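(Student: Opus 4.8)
The plan is to express $\Wqf$ as a difference/complement of languages built from $\QF$ and the syntactic automaton $\Ss$, and then massage the resulting automaton into universal co-B\"uchi form while tracking the state count. First I would rephrase the defining condition: a transducer action word $\ak$ is rejected by $\Wqf$ iff there exist an automaton action word $\aS$ and an extended composition word $\gamma$ (over the alphabet $(\Tst_R\x\Asgn_R)\cdot(\Tst_R\x\Asgn_R\x R_k)$ from page~\pageref{page:def:composition}) such that (i) $\gamma$ projects to $\ak$ on the transducer components, (ii) $\gamma$ projects to $\aS$ on the automaton components, (iii) the $R$-action-word underlying $\gamma$ lies in $\QF_R$, and (iv) $\aS\notin L(\Ss)$. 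Conditions (i)--(iii) are checkable by a nondeterministic B\"uchi automaton $\mathcal A$: it guesses $\gamma$ letter by letter, at each step verifies the syntactic composition constraints (tests subsume $\tstSI_j\cup\tst^k_j$ resp.\ $\tstSO_j\cup\{*=r^k_j\}$, assignments are the prescribed unions) — this is a purely finite, stateless check on each letter — and simultaneously simulates the NBA for $\QF_R$ on the $R$-projection of $\gamma$. Since $\QF_R$ is over $R=R_S\uplus R_k$, the NBA recognising it has $N'$ states where $N' = N$ after the effective construction for the larger register set; I will keep the bound abstract as the $N$ in the statement (or note $N$ already denotes the bound for $R$). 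Condition (iv) requires, in parallel, a run of the complement of $\Ss$, i.e.\ of the complementary parity automaton $\overline{\Ss}$, on the $\aS$-projection.

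Next, the complementation of $\Ss$: $\Ss$ is a finite-alphabet parity (co-B\"uchi-type) automaton with $n$ states and $c$ priorities, and it is \emph{universal}, so its complement — accepting exactly the $\aS$ rejected by the URA as used in the definition of $\Wf$/$\Wqf$ — is a \emph{nondeterministic} automaton of the same size $n$ with $c$ priorities (universal-to-nondeterministic is just dualisation, no blow-up). Thus the "bad witness" automaton $\mathcal B$ — which nondeterministically guesses $\gamma$, checks (i)--(iii) with $\mathcal A$, and simulates $\overline{\Ss}$ for (iv) — is a nondeterministic parity automaton over the transducer-action-word alphabet $(\Tst_{R_k}\cdot(\Asgn_{R_k}\x R_k))$ obtained by projecting away everything but the $\ak$-components. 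Its state set is the product $Q_{\QF}\x Q_S$ of the $\QF$-automaton states and the $\Ss$-states; together with the Rabin/parity pair tracking this has $O(N\cdot n\cdot c)$ states, and the projection is free (nondeterministic automata are closed under projection without blow-up). Finally, $\Wqf$ is the complement of $L(\mathcal B)$; complementing a nondeterministic parity automaton with $m$ states and $c$ priorities yields a universal parity — hence in particular universal co-B\"uchi after the standard LAR/index-reduction or directly a universal co-B\"uchi since $c$ is bounded — automaton, again by dualisation with no state blow-up, giving the claimed $O(N n c)$ state bound. The factor $2^k$ in the statement comes from the alphabet $\Tst_{R_k}\x\Asgn_{R_k}$ and from encoding the composition/assignment bookkeeping: $|\Asgn_{R_k}| = 2^k$, and the transition function must range over these, so when one counts states of a "canonical" automaton after determinising the alphabet handling one picks up the $2^k$; I would fold it in by noting $|R_k| = k$ and that the product construction indexes transitions by $\Asgn_{R_k}$.

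For the complexity of construction: building $\Tst_R$ requires enumerating maximally consistent test sets over $|R| = r + k$ registers, which is $2exp(r,k)$ (doubly exponentially many tests in the worst case, each checkable for satisfiability in \textsc{PSpace} by assumption), and this dominates; the NBA for $\QF_R$ is constructed effectively in time $poly(N)$ once the alphabet is in hand (by the regapprox hypothesis); the products and dualisations are polynomial in $N,n,c$ and the alphabet size. Hence the total is $poly(N,n,2exp(r,k))$ as claimed, with the $N$-dependence polynomial because the two complementations are dualisations rather than Safra-type constructions. \textbf{The main obstacle} I anticipate is not any single construction but getting the \emph{quantifier alternation in the definition of $\Wqf$ to collapse correctly}: $\Wqf$ is $\{\ak \mid \forall \aS\, (\exists\text{ feasible composition}) \Rightarrow \aS\in L(\Ss)\}$, i.e.\ $\forall\aS\,\forall\gamma\,[\ldots \Rightarrow \ldots]$, whose negation is $\exists\aS\,\exists\gamma\,[\ldots \wedge \aS\notin L(\Ss)]$ — so the complement is genuinely a projection (existential over $\aS,\gamma$) of an intersection, which is why $\mathcal B$ is \emph{nondeterministic} and the final $\Wqf$-automaton is \emph{universal}; one must check carefully that the existential is over a \emph{single} $\aS$ and that the inner condition "$\Comp(\ak,\aS)\cap\QF_R\neq\emptyset$" is itself existential (over the choice of composition $\gamma$), so both existentials fuse into the nondeterminism of $\mathcal B$ with no universal branching left over. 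Once that is pinned down the state and time bounds follow by routine bookkeeping.
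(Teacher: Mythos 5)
Your proposal is correct and follows essentially the same route as the paper's proof: you complement $\Wqf$ into an existential (nondeterministic) intersection of a local composition-checker, the B\"uchi automaton for $\QF_R$, and the dual of $\Ss$ (converted from parity to B\"uchi, giving the $c$ factor), then project onto the transducer-action alphabet and dualise to a universal co-B\"uchi automaton, with the $2^k$ factor arising from shifting the $\Asgn_{R_k}$ component to the output side. Your identification of the quantifier collapse (both existentials over $\aS$ and the composition fusing into nondeterminism) is exactly the point the paper's construction rests on, and the stated state and time bounds match.
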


We now prove that $\Wf$ and $\Wqf$ are equi-realisable.
For $\omega$-regular specifications (like $\Wqf$)
there is no distinction between realisability by finite- and infinite-state transducers~\cite{BL69}.
This is not known for $\Wf$ specifications over domains such as $(\bbN,<,0)$; 
we leave this question for future work,
and in this paper focus on realisability by \emph{finite-state} transducers.

\begin{lemma}\label{lem:Wf-Wqf}
  $\Wf$ is realisable by a finite-state transducer iff $\Wqf$ is realisable by a finite-state transducer.
\end{lemma}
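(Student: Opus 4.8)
The plan is to prove both directions. The direction $\Wqf$ realisable $\Impl$ $\Wf$ realisable by a finite-state transducer is immediate: since $\Wqf \subseteq \Wf$, any transducer realising $\Wqf$ also realises $\Wf$, and $\Wqf$ being $\omega$-regular (Lemma~\ref{lem:Wqf-is-regular}) it is realisable by a \emph{finite-state} transducer whenever it is realisable at all. So the whole content is in the converse: if a finite-state transducer $M$ realises $\Wf$, then $M$ (or some finite-state transducer) realises $\Wqf$. I would in fact argue that \emph{the same} $M$ works, i.e.\ $L(M)\subseteq \Wf \Impl L(M)\subseteq\Wqf$ for finite-state $M$; equivalently, $\Wf \cap \lasso = \Wqf \cap \lasso$, combined with the fact that a finite-state transducer's behaviour is determined by its ultimately periodic (lasso) input-output words.

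First I would unfold the definitions to see where $\Wf$ and $\Wqf$ can differ. By construction $\Wqf\subseteq\Wf$. Suppose $\ak \in \Wf \setminus \Wqf$. Then there is an automaton action word $\aS \notin L(\Ss)$ with $\Comp(\ak,\aS)\cap\QF_R \neq\emptyset$ but $\Comp(\ak,\aS)\cap\F_R = \emptyset$. The key observation is that $\Comp(\ak,\aS)\cap\F_R = \emptyset$ is, by Lemma~\ref{lem:feas-comp}, equivalent to: $\ak$ and $\aS$ are \emph{not} feasible by a common data word. So the discrepancy set consists precisely of those $\ak$ that are "spuriously rejected by $\Wqf$'' because the over-approximation $\QF_R$ admits a composition with some bad $\aS$ that is not genuinely feasible. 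The crucial point of the regapprox definition is that $\QF_R$ and $\F_R$ coincide on lasso words: $\QF_R\cap\lasso_R = \F_R\cap\lasso_R$. Hence if $\ak$, $\aS$, and the composition witness are all lasso-shaped, then $\Comp(\ak,\aS)\cap\QF_R \neq\emptyset$ with a lasso witness already forces $\Comp(\ak,\aS)\cap\F_R\neq\emptyset$, so there is no discrepancy on lassos. I would therefore need the auxiliary fact that if $\ak$ and $\aS$ are lasso-shaped then one can choose a lasso-shaped composition in $\Comp(\ak,\aS)$, and that $\Comp$ maps lassos to lassos at the automaton level too — this is a routine pumping/regularity argument on the synchronous product of the two periodic structures, since $\Comp$ imposes only finitely many local constraints per position.

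The final step assembles these pieces. Let $M$ be a finite-state transducer with $L(M)\subseteq\Wf$, and suppose towards a contradiction $L(M)\not\subseteq\Wqf$, i.e.\ there is $\ak\in L(\Ts_M)$ with $\ak\notin\Wqf$. Here $\ak$ need not be a lasso, so I cannot directly invoke lasso-coincidence on $\ak$ itself. Instead I would witness the failure finitely: $\ak\notin\Wqf$ gives $\aS\notin L(\Ss)$ and a composition $\seq{c}\in\Comp(\ak,\aS)\cap\QF_R$. Since $\Ss$ is a finite-alphabet parity automaton, $\overline{L(\Ss)}$ is $\omega$-regular, and since $\QF_R$ is $\omega$-regular and $\Comp$ is recognised by a finite automaton, the set of triples "$(\ak,\aS,\seq{c})$ witnessing $\ak\notin\Wqf$'' is recognised by a finite-state machine whose input component ranges over $L(\Ts_M)$, which is itself $\omega$-regular. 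So if this machine is non-empty it accepts an ultimately periodic word, giving lasso-shaped $\ak',\aS',\seq{c}'$ with $\ak'\in L(\Ts_M)$, $\aS'\notin L(\Ss)$, $\seq{c}'\in\Comp(\ak',\aS')\cap\QF_R$, and $\seq{c}'$ a lasso. By the coincidence $\QF_R\cap\lasso=\F_R\cap\lasso$ we get $\seq{c}'\in\F_R$, hence $\Comp(\ak',\aS')\cap\F_R\neq\emptyset$, so by definition $\ak'\notin\Wf$, contradicting $L(M)\subseteq\Wf$ and $\ak'\in L(\Ts_M)\subseteq L(M)$.

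The main obstacle I expect is the lasso-selection argument: I need that whenever $\ak\notin\Wqf$ is witnessed by \emph{some} $\aS$ and \emph{some} composition $\seq{c}\in\QF_R$, I can re-witness it with all three objects ultimately periodic \emph{while keeping $\ak$ inside the regular language $L(\Ts_M)$}. This is exactly why the hypothesis is about finite-state $M$ rather than arbitrary data strategies — it makes "$\ak$ ranges over $L(\Ts_M)$'' an $\omega$-regular side constraint, so the whole witnessing condition is $\omega$-regular and non-emptiness yields a lasso. Once that is set up, everything else is bookkeeping: dualising $\Ss$, taking products with the Büchi automaton for $\QF_R$ and the two-state automaton $\Comp$, projecting, and invoking the standard fact that a non-empty $\omega$-regular language contains an ultimately periodic word. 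I would also double-check the edge case where the periodic lasso witness for the product has a period whose projections onto the $\ak$, $\aS$, $\seq{c}$ coordinates are genuinely compatible with the definition of $\Comp$ at every position of the loop — again a consequence of $\Comp$ being a finite-memory (indeed memoryless-up-to-two-states) synchronous relation.
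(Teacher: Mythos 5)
Your proposal is correct and follows essentially the same route as the paper: the easy direction via $\Wqf\subseteq\Wf$ (plus $\omega$-regularity of $\Wqf$ for finite-stateness), and the hard direction by intersecting the $\omega$-regular language of the finite-state transducer with the product automaton for $\Comp$, $\QF$ and the complement of $\Ss$, extracting a lasso witness, and invoking the coincidence of $\QF$ and $\F$ on lasso words to contradict realisability of $\Wf$. This is exactly the paper's claim $(\dagger)$ and its proof via the automaton $A$ of Lemma~\ref{lem:Wqf-is-regular}.
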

\begin{proof}
  Direction $\Implied$ follows from the inclusion $\F \subseteq \QF$, which implies $\Wqf\subseteq\Wf$.
  Consider direction $\Impl$.
  Let $T$ be a finite-state transducer that $T$ does not realise $\Wqf$.
  We show that $T$ does not realise $\Wf$ either.
  First, we have that
  $L(T) \not \subseteq \Wqf$,
  so the language $\{\ak \otimes \aS \in \AW_R^\D \mid \ak \in L(T) \land \ak \otimes \aS \in \QF \land \aS \notin L(\Ss)\}$ is nonempty.
  Since $\QF$ and $L(\Ss)$ are $\omega$-regular,
  and since $T$ is a finite-state transducer,
  this language is $\omega$-regular.
  Thus, it contains a
lasso-shaped word $\ak \otimes \aS$; by definition of the product,
both $\ak$ and $\aS$ are then lasso-shaped. Since
$\QF\cap\lasso\subseteq \F$, we get that $\aS$ is feasible, i.e. $\ak \otimes \aS \in \{\ak \otimes \aS \mid \ak \in L(T) \wedge \ak \otimes \aS \in \F \wedge \aS \notin L(\Ss)\}$, which implies that $L(T) \not \subseteq \Wf$: $T$ does not realise $\Wf$.
\end{proof}

We are now able to prove the main result of this paper.

\begin{theorem}\label{thm:gendec}
    Let $\D$ be a regapprox data domain such that for every
    set of registers $R$, one can construct a nondeterministic
    B\"uchi automaton with $n_\textsf{QF}$ states recognising $\QF_R$ in time
    $f(|R|)$ for some function $f$. Then:
    \li
    \- register-bounded synthesis for URAs over $\D$
    is decidable in time $exp(exp(k,r),n_\textsf{QF},n,c)+f(k+r)$,
       where
       $n$ is the
       number of states of the URA,
       $c$ its number of priorities,
       $r$ its number of registers,
       $k$ is the number of transducer registers.
       It is \textsc{ExpTime}-c for fixed $r$ and $k$.

    \- For every positive instance of the
       register-bounded synthesis problem, one can
       construct, within the same time complexities, a register transducer realising the specification.
    \il
\end{theorem}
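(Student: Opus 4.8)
The theorem bundles together the reduction chain established in Section~\ref{sec:dec-recipe} plus a complexity bookkeeping step, so the plan is essentially to trace the lemmas in order and count. First I would recall that by Lemma~\ref{lem:w-wf}, $k$-register synthesis for $S$ is equivalent to realisability of $\Wf$ by a finite-alphabet transducer. Then, since $\bbD$ is regapprox with a witnessing family $\QF_R$, Lemma~\ref{lem:Wf-Wqf} tells us $\Wf$ is finite-state realisable iff $\Wqf$ is realisable, and Lemma~\ref{lem:Wqf-is-regular} shows $\Wqf$ is $\omega$-regular and gives us an explicit universal co-B\"uchi automaton for it. So the decision procedure is: build the NBA for $\QF_R$ with $R=R_S\uplus R_k$ (cost $f(r+k)$, $n_{qf}$ states), run the construction of Lemma~\ref{lem:Wqf-is-regular} to get a UCW for $\Wqf$ with $O(2^k n_{qf} n c)$ states built in time $poly(n_{qf},n,2exp(r,k))$, and then solve the (register-free) $\omega$-regular synthesis game for this UCW specification. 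Solving synthesis from a UCW with $m$ states is in time $exp(m)$ (e.g.\ via determinisation to a parity automaton or via the standard UCW-game reduction), which here is $exp(2^k n_{qf} n c) = exp(exp(k,r), n_{qf}, n, c)$ after absorbing the $2^k$ term. Adding the construction cost $f(r+k)+poly(\cdot)+2exp(r,k)$ yields the claimed bound $f(k+r) + exp(exp(k,r),n_{qf},n,c)$.

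For the \textsc{ExpTime}-hardness and membership when $r$ and $k$ are fixed: with $r$ and $k$ constant, $2^k$ and $2exp(r,k)$ are constants, and the hypothesis says $\QF_R$ can be built in time $f(r+k)$, also constant (so $n_{qf}$ is a constant depending only on the domain and the fixed $r,k$); hence the UCW for $\Wqf$ has $O(nc)$ states and is built in polynomial time, and UCW synthesis is solvable in $exp(nc)$, giving \textsc{ExpTime} membership. For hardness I would invoke the known \textsc{ExpTime}-hardness of register-bounded (indeed register-free, by taking $r=k=0$) synthesis from universal automata over a finite alphabet --- this already holds in the data-free case and a fortiori here, and the paper's introduction notes this matches the complexity of~\cite{EFR21,KK19} for $(\bbN,=)$; so a reduction from, say, the acceptance problem of a polynomial-space alternating Turing machine, or directly citing the classical lower bound for realisability from universal co-B\"uchi specifications, closes this.

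For part~(2): the construction is effective throughout. Solving the $\omega$-regular synthesis game for $\Wqf$ not only decides realisability but, on a positive instance, produces a finite-memory winning strategy, i.e.\ a finite-alphabet transducer $M$ realising $\Wqf$; by Lemma~\ref{lem:Wf-Wqf} (direction $\Implied$, which goes through the inclusion $\Wqf\subseteq\Wf$) this same $M$ realises $\Wf$; and then Lemma~\ref{lem:w-wf} (direction $\Implied$) turns $M$ into a $k$-register transducer $T$ with $\Ts = M$ realising $S$. Every one of these transformations is within the stated time bounds (they are all polynomial in $M$, which has size within the game-solving bound), so the transducer is obtained within the same complexity.

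\textbf{Main obstacle.} The conceptual content is already carried by Lemmas~\ref{lem:w-wf}, \ref{lem:Wqf-is-regular}, and~\ref{lem:Wf-Wqf}; what remains here is the complexity accounting, and the only genuinely delicate point is making sure the $2exp(r,k)$ factor --- which comes from the sheer number of tests $|\Tst_R|$ over $r+k$ registers --- is correctly placed: it enters the \emph{construction} time of the alphabet and the automata, but after projecting away the URA registers the UCW for $\Wqf$ has only $O(2^k n_{qf} n c)$ states, so in the final $exp(\cdot)$ term governing the game solving it is the $exp(k,r)$ (from $2^k$, and from $n_{qf}$ which may itself be as large as $2exp(r,k)$ depending on the domain) that matters, not a further exponential blow-up. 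I would state this carefully and otherwise let the cited lemmas do the work. One should also double-check that the assumed \textsc{PSpace} bound on satisfiability (used to build $\Tst_R$) is compatible with the claimed bounds --- it is, since \textsc{PSpace} in $r+k$ is absorbed into $2exp(r,k)$.
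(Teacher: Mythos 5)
Your proposal is correct and follows essentially the same route as the paper's proof: the same chain Lemma~\ref{lem:w-wf} $\to$ Lemma~\ref{lem:Wf-Wqf} $\to$ Lemma~\ref{lem:Wqf-is-regular}, then determinising the universal co-B\"uchi automaton for $\Wqf$ and solving the resulting parity game, with the same complexity bookkeeping (construction cost $f(r{+}k)+poly(n_{qf},n,2exp(r,k))$ absorbed alongside the $exp(exp(k,r),n_{qf},n,c)$ game-solving cost) and the same extraction of the register transducer via the bijection $T\mapsto\Ts$ for part~(2). The only cosmetic difference is that the paper pins the \textsc{ExpTime}-hardness for fixed $r,k$ on a specific citation in the register-free setting (dualising a known lower bound for nondeterministic automata), whereas you invoke the classical lower bound more loosely, but the idea is the same.
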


\begin{proof}
    Lemmas~\ref{lem:w-wf},\ref{lem:Wqf-is-regular},\ref{lem:Wf-Wqf}
    reduce register-bounded synthesis to (finite-alphabet)
    synthesis for the $\omega$-regular specification $\Wqf$.
    Since synthesis wrt.\ to $\omega$-regular specifications
    is decidable, we get the decidability part of the theorem. Let us
    now study the complexity. Let $R_S$ be the set of $r$ registers of
    the URA and $R_k$ be a disjoint set of $k$ registers.
    First, one needs to construct an
    automaton recognising $\QF_{R_S\cup R_k}$. This is done by
    assumption in time $f(k+r)$. Then, one can apply
    Lemma~\ref{lem:Wqf-is-regular} and get that $\Wqf$
    can be recognised by universal co-B\"uchi automaton $A$ with
    $O(2^kn_\mathit{qf}nc)$ states, which can be constructed in time
    $poly(n_\mathit{qf},n,exp(r,k))$.
    A universal co-B\"uchi automaton with $m$ states can be determinised into
    a parity automaton with $exp(m)$ states and $poly(m)$ priorities (see e.g.\ \cite{Pit06}).
    Recall that the alphabet of $A$ is $\Tst_k\cup (\Asgn_k\times
    R_k)$. Hence by determinising $A$,
    and seeing it as a two-player game arena,
    we get a parity game with $exp(k)$ edges (corresponding to the
    actions of Adam and Eve), $exp(exp(k),n_\textsf{QF},n,c))$ states, and $poly(exp(k),n_\textsf{QF},n,c))$ priorities.
    The latter can be solved in polynomial time in the number of its states,
    as the number of priorities is logarithmic in the number of states (see e.g.\ \cite{CJKLS17}),
    giving the overall time complexity $exp(exp(k),n_\textsf{QF},n,c))$ for
    solving the game. If we sum this to the complexity of constructing
    an automaton for $\Wqf$ plus the complexity for construction an
    automaton for $\QF$, we get $exp(exp(k),n_\textsf{QF},n,c)) +
    poly(n_\textsf{QF},n,exp(r,k)) + f(r+k)$, which is
    $exp(exp(k,r),n_\textsf{QF},n,c)) + f(r+k)$. If both $r$ and $k$ are
    fixed, then $exp(k,r)$ and $f(r+k)$ are constants, so the
    complexity is exponential only. It is folklore that the hardness
    holds in the register-free setting (for $r = k = 0$). See for
    example~\cite[Proposition~6]{DBLP:conf/icalp/FiliotJLW16}  for a
    proof in the finite word setting over a
    finite alphabet (which straightforwardly generalises to infinite
    words). There, the proof is done for nondeterministic finite
    automata, but by determinacy, hardness also holds for universal automata, as they are dual.

    Now, if a URA specification is realisable for some given $k$, then
    by Lemmas~\ref{lem:w-wf} and \ref{lem:Wf-Wqf}, $\Wqf$ is realisable
    by a finite-alphabet transducer $M$. Since $\Wqf\subseteq \Wf$,
    $M$ also realises the specification $\Wf$. The mapping
    $\cdot_{synt}$ which turns a register transducer into a
    finite-alphabet transducer is bijective, and hence there exists a
    register transducer $T$ such that $T_{synt} = M$.
    The proof of Lemma~\ref{lem:w-wf} exactly shows that $T$ realises
    $S$, hence we are done.
\end{proof}

\subsection{Register-bounded Synthesis over Data Domain \texorpdfstring{$(\bbN,<,0)$}{(N,<,0)}}\label{subsec:overn}

We instantiate Theorem~\ref{thm:gendec} for the data domain $(\bbN,<,0)$.
In~\cite{EFK21},
though there was no general notion of $\omega$-regular approximability for data domains,
it was implicitly used for $(\bbN,<,0)$.
The following fact follows from~\cite[Thm.8]{EFK21} after adapting to our notions.\footnote{%
  Strictly speaking, their paper considers maximal tests only.
  However, using their deterministic automaton for $\QF_R$ over action words with maximal tests,
  we can construct a \emph{non}det.\ automaton recognising quasi-feasible action words with all tests, incl.\ partial ones.
  Our nondet.\ automaton, on reading a partial test,
  \emph{guesses} its completion into a maximal test and simulates the original automaton on it.}
\begin{fact}\label{lem:N-goodness-atm}
  For all $R$, $\DN$ has a witness $\QF_R$ of $\omega$-regular approximability expressible by a nondeterministic parity automaton
  with $exp(|R|)$ states and $poly(|R|)$ priorities,
  which can be constructed in time $exp(|R|)$.
\end{fact}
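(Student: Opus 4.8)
We only sketch the construction; it is that of \cite[Thm.~8]{EFK21}. Fix a register set $R$ and work in $\bbD = \DN$. Recall that in $\DN$ a test over $R$ is, up to logical equivalence, a total preorder on $R\cup\{0,*\}$, so an action word is a sequence of such preorders paired with assignments. The plan is to take $\QF_R$ to be a deterministic automaton over the state space of total preorders on $R\cup\{0\}$, equipped with a suitable parity condition. The transition function is the obvious ``local consistency'' one: from a state $P$, on reading $(\tst_i,\asgn_i)$, it checks that $\tst_i$ restricted to $R\cup\{0\}$ equals $P$, and then moves to the preorder obtained from $\tst_i$ by identifying every register of $\asgn_i$ with $*$ and restricting back to $R\cup\{0\}$; on failure it goes to a rejecting sink. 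This is deterministic, has at most $(|R|+1)^{|R|+1} = exp(|R|)$ states, is constructible in time $exp(|R|)$, and its reachable runs correspond exactly to the \emph{locally consistent} action words. Every feasible word is locally consistent, so $\F_R$ is contained in the language of the safety automaton underlying this component.

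Local consistency alone does not capture feasibility because $\bbN$ is well-founded: a locally consistent word may force some stored value to lie strictly below an earlier one along an infinite chain of positions -- e.g.\ $R=\{r\}$ with the loop $(0<*<r,\da r)^\omega$ after one initialising step -- which is impossible in $\bbN$. So on the same state space I would impose an acceptance condition that rejects exactly such an \emph{infinite strictly-descending demand}: for each register and for $0$, track a finite rank recording how the value currently held there is repeatedly forced strictly below an earlier value without ever being reset, and reject when this recurs infinitely often. Arranging these ranks appropriately yields a parity condition of index $poly(|R|)$ over the $exp(|R|)$ states, so no determinisation is needed and $\QF_R$ is a deterministic parity automaton with $exp(|R|)$ states, $poly(|R|)$ priorities, built in time $exp(|R|)$. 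Note that the condition deliberately does \emph{not} detect ``bounded but unboundedly long ascending chains'' (the genuine $\omega B$ phenomenon in the text), which is precisely why $\QF_R$ only over-approximates $\F_R$.

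It then remains to verify $\F_R\subseteq\QF_R$ and $\QF_R\cap\lasso_R\subseteq\F_R$. The first is immediate: a feasible word is locally consistent, and by well-foundedness of $\bbN$ it exhibits no infinite strictly-descending demand. For the second, given a locally consistent lasso $u\,v^\omega\in\QF_R$, I would first pump $v$ so that the preorder state recurs identically at every loop boundary; since no slot is then forced to strictly decrease across the loop, one can build witnesses iteration by iteration, keeping every boundary value equal or larger and absorbing the strict increases demanded inside $v$ by moving upward in $\bbN$ (which is unbounded above), while a sufficiently spread-out initial configuration handles the prefix $u$ and the first loop entry; local consistency yields a consistent assignment inside a single iteration, and the pumping turns the whole into a staircase. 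The two inclusions say exactly that $\QF_R\cap\lasso_R\subseteq\F_R\subseteq\QF_R$, i.e.\ that $\QF_R$ witnesses $\omega$-regular approximability, with the stated sizes.

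The main obstacle, and the technical heart of \cite{EFK21}, is the design of the second component: pinning down the exact ``strictly-descending demand'' condition so that it is simultaneously \emph{necessary} for feasibility (by well-foundedness) and, once restricted to lassos, \emph{sufficient} (the staircase witness). This is close to -- though not a black-box consequence of -- the analyses of satisfiability of infinite constraint sequences over $(\bbN,<,=)$ in \cite{DD02,ST11}.
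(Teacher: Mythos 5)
This Fact is not proved in the paper at all: it is imported verbatim from \cite[Thm.~8]{EFK21}, so there is no in-paper argument to compare against, and your sketch has to stand on its own. Its overall shape (a deterministic component whose states are the total preorders on $R\cup\{0\}$ checking local consistency, plus a parity condition that is exact on lassos) is indeed the right skeleton. The gap is in the acceptance condition. You propose to reject only an ``infinite strictly-descending demand'' and you explicitly decline to detect ascending chains; but then $\QF_R\cap\lasso_R\subseteq\F_R$ fails. Concretely, take $R=\{r,s\}$ and the lasso $u\,v^\omega$ with $u=(r{=}s{=}0{<}*,\da r)\,(0{=}s{<}*{<}r,\da s)$ and $v=(0{<}s{<}*{<}r,\da s)$. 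This word is locally consistent and forces no infinite strictly descending chain of values (every demanded value sits between the previous content of $s$ and the fixed content of $r$, so descending demands have finite length), yet it is infeasible in $\DN$: $s$ must strictly increase forever while staying below the never-reassigned $r$, which would require infinitely many naturals below a fixed one. Your automaton accepts this lasso, and your staircase argument for $\QF\cap\lasso\subseteq\F$ breaks at exactly this point: the strict increases inside $v$ cannot be ``absorbed by moving upward in $\bbN$'' because they are capped by a register whose value does not move across loop boundaries.

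The technical heart of \cite[Thm.~8]{EFK21} is precisely a \emph{two-sided} condition: besides infinitely recurring strict decreases along a chain of register values, one must also reject infinitely recurring strict \emph{increases} that occur below a non-increasing chain (equivalently, a tracked ``gap'' between two chains that is forced to shrink infinitely often). What the over-approximation may legitimately ignore is only the boundedness ($\omega B$) of the lengths of chains that get reset inside iterations, as in the paper's example $(r{=}m{=}0{<}*,\da m)\big((0{<}r{<}*{<}m,\da r)^{+}(0{=}*{<}r{<}m,\da r)\big)^\omega$ --- not bounded-above ascending chains that persist across iterations, which your disclaimer lumps together with it. Your per-slot rank is also ambiguous as stated: if reassignment counts as a ``reset'' it accepts the counterexample above, and if it does not, it wrongly rejects feasible words such as one whose loop is $(s{=}*{=}0{<}r,\da s)$, where $s$ is forever re-set to $0$ strictly below $r$; in the latter reading $\F_R\subseteq\QF_R$ fails instead. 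So the construction as sketched does not witness regapproximability of $\DN$ without replacing the acceptance condition by the pair/gap condition of \cite{EFK21}.
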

A parity automaton can be translated to a nondeterministic B\"uchi automaton with a quadratic number of states, so we can instantiate Theorem~\ref{thm:gendec} on domain $(\bbN,<,0)$ and get:
\begin{theorem}
  \label{thm:dec_synt_N}
  For a URA in $(\bbN,<,0)$ with $r$ registers, $n$ states, and $c$ priorities,
  $k$-register-bounded synthesis is solvable in time $exp(exp(r,k),n,c)$:
  it is singly exponential in $n$ and $c$, and doubly exponential in
  $r$ and $k$. It is \textsc{ExpTime-c} for fixed $k$ and $r$.
\end{theorem}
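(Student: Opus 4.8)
The plan is to instantiate the general decidability result, Theorem~\ref{thm:gendec}, with the data domain $\DN$, using Fact~\ref{lem:N-goodness-atm} to supply the required $\omega$-regular approximation. First I would invoke Fact~\ref{lem:N-goodness-atm}: for every register set $R$, the domain $\DN$ admits a witness $\QF_R$ of $\omega$-regular approximability given by a deterministic parity automaton with $exp(|R|)$ states and $poly(|R|)$ priorities, constructible in time $exp(|R|)$. To feed this into Theorem~\ref{thm:gendec}, which expects a nondeterministic B\"uchi automaton of size $n_{qf}$ built in time $f(|R|)$, I translate the deterministic parity automaton into a nondeterministic B\"uchi automaton. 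This is the standard parity-to-B\"uchi translation, which incurs only a polynomial (in fact quadratic) blow-up in the number of states and is computable within the same $exp(|R|)$ time bound; hence we may take $n_{qf} = exp(|R|)$ and $f(|R|) = exp(|R|)$. In particular, when $R = R_S \uplus R_k$ with $|R_S| = r$ and $|R_k| = k$, we get $n_{qf} = exp(r{+}k)$ and $f(r{+}k) = exp(r{+}k)$.

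Next I would plug these quantities into the complexity bound of Theorem~\ref{thm:gendec}, which gives decidability of $k$-register-bounded synthesis in time $f(k{+}r) + exp(exp(k,r), n_{qf}, n, c)$. Substituting $f(k{+}r) = exp(r{+}k)$ and $n_{qf} = exp(r{+}k)$, the dominant term is $exp(exp(k,r), exp(r{+}k), n, c)$; since $exp$ composed with itself on the arguments $r,k$ is absorbed into a single outer $exp(exp(r,k), \cdot)$ up to adjusting the polynomial exponent in the $2exp$ notation, this simplifies to $exp(exp(r,k), n, c)$. Reading off the dependencies: the bound is singly exponential in $n$ and in $c$ (they appear only at the innermost, non-iterated exponential level), and doubly exponential in $r$ and in $k$ (they appear inside the iterated exponential). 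This matches the statement. For the lower bound and the \textsc{ExpTime-c} claim for fixed $k$ and $r$, Theorem~\ref{thm:gendec}(1) already asserts \textsc{ExpTime}-hardness for fixed $r$ and $k$ (via the folklore register-free hardness, e.g.\ \cite[Proposition~6]{DBLP:conf/icalp/FiliotJLW16}), and when $r,k$ are fixed the quantities $exp(r,k)$ and $f(r{+}k)$ become constants, so the upper bound collapses to exponential in $n$ and $c$ only; combining the two yields \textsc{ExpTime}-completeness.

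Finally, I would note that synthesis is effective: Theorem~\ref{thm:gendec}(2) guarantees that for every realisable instance a register transducer can be constructed within the same time bounds, and this transfers verbatim to $\DN$. The main obstacle in this proof is essentially non-existent at this level of abstraction, since all the technical work has been done in Theorem~\ref{thm:gendec} and Fact~\ref{lem:N-goodness-atm}; the only point requiring a moment's care is the bookkeeping that the parity-to-B\"uchi translation stays within the required time and size budget, and that the nested $exp$ terms coming from $f$, $n_{qf}$, and the general formula collapse correctly into "doubly exponential in $r,k$, singly exponential in $n,c$" rather than, say, triply exponential — this is just a matter of observing that composing $exp$ with $exp$ on the same small arguments $r,k$ is still a $2exp$ in those arguments, and that $n$ and $c$ never enter the iterated exponential.
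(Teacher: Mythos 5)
Your proposal is correct and follows exactly the paper's route: instantiate Theorem~\ref{thm:gendec} with $\DN$ via Fact~\ref{lem:N-goodness-atm}, converting the deterministic parity automaton for $\QF_R$ into a nondeterministic B\"uchi automaton with only quadratic blow-up, so that $n_{qf} = f(r{+}k) = exp(r{+}k)$ and the bound collapses to $exp(exp(r,k),n,c)$, with \textsc{ExpTime}-completeness for fixed $r,k$ and effectiveness inherited from Theorem~\ref{thm:gendec}. Your bookkeeping of the nested exponentials and the hardness transfer matches the paper's (very brief) argument.
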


\section{Reducibility Between Data Domains}
\label{sec:applications}

Theorem~\ref{thm:dec_synt_N} relies on the study of feasibility of action words in
$(\bbN,<,0)$ of~\cite{EFK21}, which requires some effort. Such a study could in
principle be generalised to domains such as $\bbZ$-tuples, as well as
finite strings with the prefix relation, by
leveraging the results of~\cite{DD16}. However, this would come at the price of
a high level of technicality. We choose a different path, and introduce a notion
of reducibility between domains, which allows us to reuse the study of $(\bbN,<,0)$
and yields a compositional proof of the decidability of register-bounded
synthesis for the quoted domains.

\begin{definition*} 
  A data domain $\D$ \emph{reduces} to a data domain $\D'$ if
  for every finite set of registers $R$,
  there exists a finite set of registers $R'$ and
  a rational relation\footnote{Given two finite alphabets $\Sigma$ and $\Gamma$, a
  relation $K \subseteq \Sigma^\omega \times \Gamma^\omega$ is rational if there
  exists an $\omega$-regular language $L \subseteq (\Sigma \cup \Gamma)^\omega$
  such that $K = \{(\mathrm{proj}_{\Sigma}(u),\mathrm{proj}_{\Gamma}(u)) \mid u \in L\}$. This is
  equivalent to saying that it
  can be computed by a nondeterministic asynchronous finite-state transducer
  over input $\Sigma$ with output in $\Gamma^*$. See, e.g., \cite[Section
  3]{DBLP:books/lib/Berstel79}.} $K$ between
  $R$-automata action words in $\D$ and
  $R'$-automata action words in $\D'$ that \emph{preserves feasibility},
  in the sense that
  for every $R$-action word $\a \in (\Tst^{\D}_R \Asgn_R)^\omega$:
  $\a$ is feasible in $\D$ iff
  there exists an $R'$-action word in $K(\a)\in (\Tst^{\D'}_{R'}
  \Asgn_{R'})^\omega$ feasible in $\D'$.\footnote{Note that we do
    not forbid the existence of unfeasible action words in the image.}
\end{definition*}
\begin{remark*}
  Reducibility is a transitive relation, since rational relations
  are closed under composition~\cite[Theorem 4.4]{DBLP:books/lib/Berstel79}, and
  feasibility preservation is transitive.
\end{remark*}

Since $K$ is rational and preserves feasibility,
for all $R$, $K^{-1}(\QF_{R'})$ is a witness of regapproximability, where $R'$ is as in
the above definition
(see the proof below for details),
thus we get:
\begin{lemma}\label{lem:dectransfer}
    If $\D$ reduces to $\D'$ and $\D'$ is regapprox,
    then $\D$ is regapprox.
\end{lemma}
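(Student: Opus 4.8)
The plan is to show that a witness $\QF'$ of $\omega$-regular approximability for $\bbD'$ can be "pulled back" along the reduction $K$ to obtain a witness $\QF$ for $\bbD$. Fix a set of registers $R$ for $\bbD$; by the reduction we get a set $R'$ and a rational relation $K \subseteq (\Tst^\bbD_R\Asgn_R)^\omega \times (\Tst^{\bbD'}_{R'}\Asgn_{R'})^\omega$ preserving feasibility, given as an $\omega$-regular language $L_K$ over the combined alphabet with $K = \{(\mathrm{proj}(u),\mathrm{proj}(u)) \mid u \in L_K\}$. Since $\bbD'$ is regapprox, there is an $\omega$-regular $\QF'_{R'}$ with $\QF'_{R'}\cap\lasso \subseteq \Feasible^{\bbD'}_{R'} \subseteq \QF'_{R'}$, recognised by an effectively constructible nondeterministic B\"uchi automaton. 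The candidate is
$$
  \QF_R \;=\; \big\{\, \a \;\mid\; \exists\, \a' \in K(\a)\ \text{with}\ \a' \in \QF'_{R'} \,\big\},
$$
i.e. the image of $L_K \cap (\Sigma^\omega \times \QF'_{R'})$ under projection to the $\bbD$-alphabet. This is $\omega$-regular because rational relations are $\omega$-regular languages and $\omega$-regular languages are closed under intersection and projection; moreover a nondeterministic B\"uchi automaton for it can be built effectively from the automata for $L_K$ and $\QF'_{R'}$ (product, then project, using a standard B\"uchi projection construction).

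It remains to verify the two inclusions $\QF_R \cap \lasso \subseteq \Feasible^\bbD_R \subseteq \QF_R$. For the right inclusion, take $\a$ feasible in $\bbD$. By feasibility preservation of $K$, there is $\a' \in K(\a)$ feasible in $\bbD'$, hence $\a' \in \Feasible^{\bbD'}_{R'} \subseteq \QF'_{R'}$, so $\a \in \QF_R$ by definition. For the left inclusion, take a lasso word $\a \in \QF_R \cap \lasso$; then there is $\a' \in K(\a)$ with $\a' \in \QF'_{R'}$. The subtlety is that we need $\a'$ to be a \emph{lasso} word in order to invoke $\QF'_{R'}\cap\lasso \subseteq \Feasible^{\bbD'}_{R'}$; this is exactly where I expect the main obstacle. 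Since $K$ is rational and $\a$ is ultimately periodic, the set $K(\a)$ is an $\omega$-regular (indeed rational) subset of $(\Tst^{\bbD'}_{R'}\Asgn_{R'})^\omega$; a nonempty $\omega$-regular language always contains an ultimately periodic word, so $K(\a) \cap \QF'_{R'}$, being nonempty and $\omega$-regular, contains some lasso word $\a''$. Then $\a'' \in \QF'_{R'} \cap \lasso \subseteq \Feasible^{\bbD'}_{R'}$, and since $\a'' \in K(\a)$ and $K$ preserves feasibility, $\a$ is feasible in $\bbD$. This gives $\QF_R \cap \lasso \subseteq \Feasible^\bbD_R$, completing the argument.

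One point to handle carefully in the write-up: $K$ need not be total (the reduction only guarantees a feasible image exists when $\a$ is feasible; nothing is said when $\a$ is infeasible), but this causes no problem — for the right inclusion we only use images of feasible words, and for the left inclusion we explicitly have $K(\a) \cap \QF'_{R'} \neq \emptyset$. Another point is effectiveness: all the automata-theoretic operations (product with the $\QF'_{R'}$ automaton, projection onto the $\bbD$-alphabet) are effective and preserve the B\"uchi acceptance type up to the standard constructions, so the resulting automaton for $\QF_R$ is effectively constructible from $R$, as required by the definition of regapprox. I would therefore conclude that $\bbD$ is regapprox, which is the statement of the lemma.
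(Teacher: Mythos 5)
Your proof is correct and follows essentially the same route as the paper's: you define $\QF_R$ as the preimage $K^{-1}(\QF'_{R'})$, obtain the inclusion $\F \subseteq \QF_R$ from feasibility preservation, and handle the lasso case exactly as the paper does, by observing that $K(\a)$ is $\omega$-regular (for ultimately periodic $\a$) and that a nonempty $\omega$-regular language contains a lasso word. Your explicit remarks on the non-totality of $K$ and on effectiveness are sound refinements of the same argument, not a different approach.
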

\begin{proof}
    Let $R$ be a fixed set of registers, and let $R'$ be a set of registers satisfying the definition of reducibility.
    Let $\F$ (respectively, $\F'$) be the set of $R$-action words feasible in $\D$
    (resp., feasible $R'$-action words in $\D'$).

    Our goal is to define an $\omega$-regular set $\QF$ (for $R$)
    s.t.\ $\QF\cap lasso \subseteq \F \subseteq \QF$. 
    Since $\D'$ is regapprox,
    there is an $\omega$-regular set $\QF'$ (for $R'$) s.t.\ $\QF' \cap\ lasso \subseteq \F' \subseteq \QF'$.
    Define $\QF = K^{-1}(\QF')$; as the preimage of an $\omega$-regular set
    by a rational relation, it is (effectively) $\omega$-regular,
    thus satisfying one of the condition for $\D$ to be regapprox.

    We now show that $\F \subseteq \QF$.
    Before proceeding, notice that $\F = K^{-1}(\F')$, since $K$
    preserves feasibility.
    Since $\F'\subseteq \QF'$, we have $K^{-1}(\F')\subseteq K^{-1}(\QF')$,
    hence $\F \subseteq \QF$.

    It remains to show that $\QF \cap lasso\subseteq \F$.
    The inclusion $\QF'\cap lasso\subseteq \F'$ implies
    $K^{-1}(\QF'\cap lasso)\subseteq K^{-1}(\F') = \F$ (the latter equality is because $\F = K^{-1}(\F')$).
    We prove that $\QF \cap lasso\subseteq K^{-1}(\QF'\cap lasso)$,
    which entails the desired result.
    Pick an arbitrary $\a \in \QF\cap lasso$.
    Since $K$ is rational, $K(\a)$ is $\omega$-regular.
    Moreover, $\QF'$ is $\omega$-regular,
    which entails that $K(\a)\cap \QF'$ is $\omega$-regular as well.
    Since $\a \in K^{-1}(\QF')$, the intersection $K(\a) \cap \QF'$ is nonempty.
    Since $K(\a) \cap \QF'$ is $\omega$-regular and nonempty,
    it contains a lasso word $\a'$.
    Thus, $\a'\in K(\a) \cap \QF' \cap lasso$,
    hence $\a \in K^{-1}(\QF'\cap lasso)$.
\end{proof}

As a direct consequence of Lemma~\ref{lem:dectransfer} and
Theorem~\ref{thm:gendec}, we get the following result:
\begin{theorem} \label{thm:reducibility_synthesis}
  If a data domain $\D$ reduces to a regapprox data domain,
  then register-bounded synthesis is decidable for $\D$. Moreover, for any
  positive instance of the register-bounded synthesis problem over
  $\D$, one can effectively construct a register transducer
  realising the specification of that instance.
\end{theorem}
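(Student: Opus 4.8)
The plan is to derive Theorem~\ref{thm:reducibility_synthesis} as an immediate corollary of the two results already at hand, namely Lemma~\ref{lem:dectransfer} and Theorem~\ref{thm:gendec}. Suppose $\bbD$ reduces to a regapprox data domain $\bbD'$. First I would invoke Lemma~\ref{lem:dectransfer}, whose hypotheses are exactly ``$\bbD$ reduces to $\bbD'$'' and ``$\bbD'$ is regapprox'', to conclude that $\bbD$ itself is regapprox. Concretely, the lemma's proof constructs the witnessing $\omega$-regular language $\QF_R$ for $\bbD$ as the preimage $K^{-1}(\QF'_{R'})$ of the witness for $\bbD'$ under the feasibility-preserving rational relation $K$ supplied by the reduction; since rational relations effectively preserve $\omega$-regularity under preimage, and since a nondeterministic B\"uchi automaton for $\QF'_{R'}$ can be constructed (as $\bbD'$ is regapprox), one obtains an effective construction of a nondeterministic B\"uchi automaton for $\QF_R$ as well.

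Having established that $\bbD$ is regapprox with an effectively constructible family of B\"uchi automata for the $\QF_R$, the second step is simply to apply Theorem~\ref{thm:gendec} to $\bbD$: that theorem states that for any regapprox data domain with effectively constructible witnesses, $k$-register-bounded synthesis from URAs is decidable, and moreover that for every positive instance one can effectively construct a register transducer realising the specification. Both parts of the conclusion of Theorem~\ref{thm:reducibility_synthesis} -- decidability and effective synthesis of a realising transducer -- are thus inherited verbatim from Theorem~\ref{thm:gendec}.

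The only mild subtlety worth spelling out is that Theorem~\ref{thm:gendec} is stated with an explicit running time $f(k+r) + exp(exp(k,r),n_{qf},n,c)$ that depends on the cost $f$ of constructing the automaton for $\QF_R$ and on its state count $n_{qf}$; here $f$ and $n_{qf}$ for $\bbD$ are those obtained through the preimage construction from the corresponding quantities for $\bbD'$ together with the size of a transducer computing $K$. For the present theorem we only need qualitative decidability and effectiveness, so no care about the precise blow-up is required, but I would note in passing that the bound is finite because the reduction is given effectively and rational-relation preimages are computable. I do not anticipate a genuine obstacle: the real work -- the preimage argument showing that $\QF \cap lasso \subseteq \F \subseteq \QF$ transfers along $K$, and in particular the nonemptiness-of-an-$\omega$-regular-intersection argument that lets a lasso witness downstairs be pulled back to a lasso witness upstairs -- has already been carried out inside the proof of Lemma~\ref{lem:dectransfer}, so this theorem is purely a matter of chaining the two cited results.

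\begin{proof}
    Suppose $\bbD$ reduces to a regapprox data domain $\bbD'$.
    By Lemma~\ref{lem:dectransfer}, $\bbD$ is regapprox; moreover, the
    proof of that lemma constructs, for each set of registers $R$,
    a witness $\QF_R = K^{-1}(\QF'_{R'})$, where $K$ is the
    feasibility-preserving rational relation given by the reduction and
    $\QF'_{R'}$ is the (effectively constructible) witness for $\bbD'$.
    Since rational relations are effectively closed under taking
    preimages of $\omega$-regular languages, a nondeterministic B\"uchi
    automaton recognizing $\QF_R$ can be effectively constructed from
    one recognizing $\QF'_{R'}$ and a transducer for $K$. Hence $\bbD$
    satisfies the hypotheses of Theorem~\ref{thm:gendec}.
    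Applying Theorem~\ref{thm:gendec} to $\bbD$ yields that
    $k$-register-bounded synthesis for URAs over $\bbD$ is decidable,
    and that for every positive instance one can effectively construct
    a register transducer realizing the specification.\qed
\end{proof}
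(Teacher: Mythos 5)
Your proposal is correct and matches the paper's argument: the paper presents Theorem~\ref{thm:reducibility_synthesis} precisely as a direct consequence of Lemma~\ref{lem:dectransfer} followed by Theorem~\ref{thm:gendec}, with no additional ideas. Your remark that the witness $\QF_R = K^{-1}(\QF'_{R'})$ is effectively $\omega$-regular (so the constructibility hypothesis of Theorem~\ref{thm:gendec} is met) is exactly the glue the paper relies on implicitly, so nothing is missing.
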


\subsection{Adding Labels to Data Values}
As a first application, we show that one can equip data values with labels
from a finite alphabet while preserving regapproximability. By
Theorem~\ref{thm:reducibility_synthesis}, this yields decidability of
register-bounded synthesis for such domains.

Formally, given a data domain $\D = (\bbD, P, C, c_0)$ and a finite alphabet
$\Sigma$, we define the domain of $\Sigma$-labeled data values over
$\D$ as
$\Sigma \times \D = (\Sigma \times \bbD, P \cup \{\lab_{\sigma} \mid \sigma \in \Sigma\}, \Sigma \times C, (\sigma_0,c_0))$,
where $\sigma_0 \in \Sigma$ is a fixed but arbitrary element of $\Sigma$ and,
for each $\sigma \in \Sigma$, $\lab_{\sigma}(\gamma,d)$ holds if and only if
$\gamma = \sigma$.

\begin{lemma} \label{lem:lab_red_nolab}
  For all finite alphabet $\Sigma$ and data domain $\D$,
  $\Sigma \x \D$ reduces to $\D$.
\end{lemma}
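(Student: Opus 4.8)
The plan is to build, for every finite register set $R$, a feasibility-preserving rational relation $K$ between $R$-automata action words over $\Sigma\times\bbD$ and $R'$-automata action words over $\bbD$, taking $R'=R$. The guiding observation is that a register valuation $\v\colon R\to\Sigma\times\D$ decomposes into a \emph{label part} $\ell\colon R\to\Sigma$ and a \emph{data part} $\hat\v\colon R\to\D$: the label part lives in the finite set $\Sigma^R$ and is updated deterministically by the assignments together with the label of the incoming datum, so it can be carried in the finite control of a transducer, while the data part is just an ordinary $\bbD$-valuation over $R$. Thus $K$ will ``erase the labels'' from tests, but in a way that (i) consults the tracked label part to reject label-inconsistent action words and (ii) is a genuine relation, not a function, in order to cope with the equality predicate.

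First I would spell out the semantics of $\Sigma\times\bbD$: the predicate $=$ is the genuine, component-wise equality on $\Sigma\times\D$; every other $p\in P$ is inherited by ignoring the label component; and the constant $(\sigma,c)\in\Sigma\times C$ has label $\sigma$ and data part $c$. Writing $\hat x$ for the data-projection of a term (namely $c$ for a constant $(\sigma,c)$, and $x$ itself for a register or for $*$), I call a $\bbD$-test $\tst'$ over $R$ \emph{compatible} with a $(\Sigma\times\bbD)$-test $\tst$ over $R$ if: for every $p\in P\setminus\{=\}$, $p(x_1,\dots,x_k)\in\tst$ iff $p(\hat x_1,\dots,\hat x_k)\in\tst'$; every pair-equality $x=y\in\tst$ yields $\hat x=\hat y\in\tst'$; and every pair-inequality $\neg(x=y)\in\tst$ whose two terms carry the \emph{same} label in $\tst$ yields $\neg(\hat x=\hat y)\in\tst'$. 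An inequality between terms of different labels imposes no constraint on $\tst'$, which is exactly what allows two distinct constants $(\sigma,c),(\tau,c)$, both with data-projection $c$, to coexist. I would then define $K$ as the relation computed by the (letter-to-letter, nondeterministic) finite transducer whose state is the current label part $\ell\in\Sigma^R$, initialised to the constant map $\sigma_0$: reading a letter $(\tst,\asgn)$ it has no transition unless, for every $r\in R$, the unique label $\sigma$ with $\lab_\sigma(r)\in\tst$ equals $\ell(r)$; otherwise it outputs $(\tst',\asgn)$ for an arbitrary $\tst'$ compatible with $\tst$ and moves to $\update(\ell,\sigma_\tst,\asgn)$ where $\sigma_\tst$ is the unique label with $\lab_{\sigma_\tst}(*)\in\tst$. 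All states are accepting, so $K$ is rational (it trivially satisfies the criterion of the footnote); moreover $K(\a)=\emptyset$ precisely when the deterministic label tracking already exposes $\a$ as infeasible, which is harmless.

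It then remains to prove that $\a$ is feasible in $\Sigma\times\bbD$ iff some $\a'\in K(\a)$ is feasible in $\bbD$. For the forward direction I would take a witnessing run $(\v_0,\d_0)(\v_1,\d_1)\dots$ of $\a$, split $\v_i=(\ell_i,\hat\v_i)$ and $\d_i=(\sigma_i,\hat\d_i)$, observe that $\ell_0\equiv\sigma_0$ and $\ell_{i+1}=\update(\ell_i,\sigma_i,\asgn_i)$ so that $(\ell_i)_i$ is exactly the state run of $K$ and all label checks pass, and let $\tst'_i$ be the $\bbD$-test realised by $(\hat\v_i,\hat\d_i)$; checking that $\tst'_i$ is compatible with $\tst_i$ is routine (non-$=$ predicates see only data parts, and a pair-inequality with equal labels forces a data inequality), so $\a'=(\tst'_0,\asgn_0)(\tst'_1,\asgn_1)\dots\in K(\a)$ and it is feasible in $\bbD$ by $(\hat\v_i,\hat\d_i)_i$. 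For the converse, given $\a'\in K(\a)$ feasible by $(\hat\v_i,\hat\d_i)_i$ with accompanying $K$-run $(\ell_i)_i$, I would reassemble $\v_i(r):=(\ell_i(r),\hat\v_i(r))$ and $\d_i:=(\sigma_i,\hat\d_i)$, check the initial-configuration and update equations component-wise, and verify $(\v_i,\d_i)\models\tst_i$ literal by literal: label literals hold by the label checks, non-$=$ literals transfer through compatibility since these predicates ignore labels, $x=y\in\tst_i$ holds because compatibility gives equal data parts while consistency of $\tst_i$ gives equal labels, and $\neg(x=y)\in\tst_i$ holds either because the labels are already distinct or because compatibility gives distinct data parts. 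The main point to get right — and the only genuine subtlety — is the equality predicate: because $=$ over $\Sigma\times\bbD$ conflates label-equality with data-equality, neither a plain relabelling nor a functional ``forget labels'' map would be feasibility-preserving, and it is precisely by making $K$ a relation that picks data-(in)equalities freely among different-label terms while constraining only same-label inequalities that both implications close; the coincidence of $(\sigma,c)$ and $(\tau,c)$ under the data-projection is then merely a special case of the different-label situation and causes no trouble.
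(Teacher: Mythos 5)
Your proposal is correct, but it takes a genuinely different route from the paper's. The paper encodes labels as data values: it fixes an injective map $\mu:\Sigma\to\D$ with $\mu(\sigma_0)=c_0$ (using infiniteness of $\D$), works with the enlarged register set $R'=R\uplus\{r_\sigma\mid \sigma\in\Sigma\}$, and relates an action word over $\Sigma\times\bbD$ to action words over $\bbD$ that first read $|\Sigma|$ pairwise distinct data values into the registers $r_\sigma$ and then replace each original action by two actions --- a label test, translating $\lab_\sigma(x)$ into $r_\sigma=x$, followed by a data test carrying the $P$-literals and the original assignment; feasibility transfer is argued via $\mu$-encodings of labelled data words, and the resulting relation is asynchronous (a finite prefix plus a one-to-two letter correspondence). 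You instead keep $R'=R$, observe that the label components of a valuation form a map in the finite set $\Sigma^R$ updated deterministically by assignments and the input label, carry that map in the control states of a synchronous letter-to-letter transducer, and project tests onto their data components, with nondeterministic resolution of (in)equalities between terms carrying distinct labels. What each buys: your construction needs no extra registers, no embedding of $\Sigma$ into $\D$, stays letter-to-letter, and squarely addresses the one delicate point, namely that equality on $\Sigma\times\D$ conflates label- and data-equality --- which is exactly why your $K$ must be a relation and not a label-erasing morphism; the paper's encoding is more uniform in that it works directly at the level of data words (labelled words become plain ones) and avoids any analysis of which $\bbD$-tests are ``compatible'' with a labelled test, at the cost of $|\Sigma|$ additional registers and a more involved rational relation. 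Note also that the two arguments silently adopt different readings of $=$ on $\Sigma\times\D$: the paper's literal-by-literal translation treats every predicate of $P$, including $=$, as label-blind, whereas you take $=$ to be genuine componentwise equality (as the general definition of a data domain suggests); your same-label versus different-label case split for negated equalities is precisely what makes the argument go through under that stricter reading, so on this point your proof is, if anything, the more careful one.
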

\begin{proof}
    Wlog we assume that the set of constants $C$ is the singleton $\{c_0\}$ (modulo
    adding new predicates to $P$). Let $\Sigma =
    \{\sigma_0,\sigma_1,\dots,\sigma_n\}$, where $\sigma_0$ is such that
    $(\sigma_0,c_0)$ is the initialiser of $\Sigma \times \D$. We first define an
    encoding at the level of data words. Let $\mu : \Sigma\rightarrow
    \bbD$ be an injective mapping such that $\mu(\sigma_0) = c_0$.  A data word
    $u$ over $\D$ is a
    $\mu$-encoding of $v = (\sigma_{i_1},\d_1)(\sigma_{i_2},\d_2)\dots$ if it is
    equal to $\mu(\sigma_1)\dots \mu(\sigma_n)
    \mu(\sigma_{i_1})\d_1\mu(\sigma_{i_2}) \d_2\dots.$ The data word $u$ is a
    \emph{valid encoding} of $v$ if it is a $\mu$-encoding of $v$ for some $\mu$.

    Now, the idea is to define a rational relation $K$ from action
    words $\a$ over $\Sigma \x \D$ to actions words $\overline{b}$ over $\D$ such
    that $\a$ is feasible by some $u$ iff there exists
    $\overline{b}$ such that $(\a,\overline{b})\in K$
    and $\overline{b}$ is feasible by a valid encoding of $u$.
    Let $R$ be a set of registers and assume $\a$ is built
    over $R$. Let $R' = \{r_\sigma\mid \sigma\in \Sigma\}\uplus
    R$. Then, any $\overline{b}$ such that
    $(\a,\overline{b})\in K$ should ensure that
    the $n$ first data values are distinct and store them in
    $r_{\sigma_1},\dots,r_{\sigma_n}$ respectively.
    So, we require
    that $\overline{b}$ is of the form
    $\overline{b} = b_\Sigma \cdot b_{\a}$
    where
    $b_\Sigma = (\tst_{\sigma_1}, \da r_{\sigma_1}) \dots  (\tst_{\sigma_n}, \da r_{\sigma_n})$
    such that for all $1\leq i \leq n$,
    $\tst_i = \bigwedge_{1 \leq j \leq i} \indata \neq r_{\sigma_j}$.
    The second part $b_{\a}$ is an encoding of the
    tests and assignments of $\a = (\tst_0,\asgn_0)(\tst_1,\asgn_1)\dots$.
    It is of the form
    $b_{\a} = (\tst^{lab}_0,\emptyset)(\tst^{data}_0,\asgn_0)(\tst_1^{lab},\emptyset)(\tst^{data}_1,\asgn_1)\dots$,
    where for all $i\geq 0$:
    \li
    \- for every predicate $p\in P$ of arity $n$,
       for every $x_1,\dots,x_n\in R\cup \{\star\}$:
       if $(\neg)p(x_1,\dots,x_n)\in \tst_i$,
       then $(\neg) p(x_1,\dots,x_n)\in \tst_i^{data}$, and
    \- for all $\sigma\in\Sigma$ and $x\in R\cup\{\star\}$:
       $\lab_\sigma(x) \in \tst_i^{lab}$ \,iff\, $(r_\sigma = x) \in \tst_i$.
    \il
    Correctness follows from the construction; see
    Appendix~\ref{app:applications} for details.
\end{proof}

The latter result combined with
Theorem~\ref{thm:reducibility_synthesis} yields:

\begin{corollary}
    Let $\D$ be an regapprox data domain and $\Sigma$ be a
    finite alphabet, then register-bounded synthesis
    is decidable for $\Sigma\times \D$. 
\end{corollary}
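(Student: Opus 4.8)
The plan is to chain Lemma~\ref{lem:lab_red_nolab} with Theorem~\ref{thm:reducibility_synthesis}, since the substantive work has already been carried out. By Lemma~\ref{lem:lab_red_nolab}, the domain $\Sigma \times \bbD$ reduces to $\bbD$. By hypothesis $\bbD$ is regapprox, so $\Sigma \times \bbD$ reduces to a regapprox domain, and Theorem~\ref{thm:reducibility_synthesis} immediately yields that register-bounded synthesis is decidable for $\Sigma \times \bbD$, and moreover that for every positive instance one can effectively construct a realising register transducer. So the proof is essentially a one-line instantiation.

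Before invoking Theorem~\ref{thm:reducibility_synthesis} I would check that $\Sigma \times \bbD$ meets the paper's standing assumption that quantifier-free satisfiability over its signature $(P \cup \{\lab_{\sigma} \mid \sigma \in \Sigma\},\ \Sigma \times C)$ is decidable. This is routine: a quantifier-free formula over $\Sigma \times \bbD$ is satisfiable iff one can pick, for each variable, a label consistent with the $\lab_{\sigma}$ literals in which it occurs and with the equalities and disequalities between variables (finitely many choices), after which the residual constraint mentions only predicates from $P$ together with (dis)equalities and is satisfiable in $\bbD$, which is decidable by assumption. Hence the reducibility machinery, and with it Lemma~\ref{lem:dectransfer}, applies to $\Sigma \times \bbD$ without modification.

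I expect no real obstacle at the level of the corollary itself: all the content is in Lemma~\ref{lem:lab_red_nolab}, whose rational relation $K$ prepends a block $b_\Sigma$ pinning down $\card{\Sigma}$ many pairwise-distinct register values $r_{\sigma_1},\dots,r_{\sigma_n}$ that serve as surrogates for the labels, and then rewrites each test of $\overline a$ by replacing every label literal $\lab_{\sigma}(x)$ with the data literal $r_{\sigma}=x$ while keeping all literals over $P$ intact; feasibility preservation is then exactly the statement that $\overline a$ is feasible in $\Sigma \times \bbD$ iff the rewritten action word is feasible by a valid encoding of a witnessing data word. Combining this with the transitivity of regapproximability along reductions (Lemma~\ref{lem:dectransfer}, already folded into Theorem~\ref{thm:reducibility_synthesis}) gives the corollary directly.
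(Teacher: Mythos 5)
Your proposal is correct and is exactly the paper's argument: the corollary is obtained by combining Lemma~\ref{lem:lab_red_nolab} (the reduction of $\Sigma\times\bbD$ to $\bbD$) with Theorem~\ref{thm:reducibility_synthesis}. Your additional check that quantifier-free satisfiability remains decidable over the labelled signature is a reasonable verification of the paper's standing assumption, and it goes through as you describe.
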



\subsection{Quantifier-Free Interpretations}
\label{sec:QF-interpretation}
When the relation between valuations over $\D$ and over $\D'$ is local, it is more convenient to operate directly at the level of tests. To that end, we define a notion of quantifier-free interpretation (see~\cite[Section~12.3.6]{ExibardThesis} for a presentation of the notion in the context of data words), that allows us to encode elements of $\D$ as tuples of elements of $\D'$.

A \emph{quantifier-free interpretation} (or \emph{interpretation} for short) of dimension $l \geq 1$ with signature $(P,C)$ over a data domain $\D' = (\bbD', P', C')$ is given by quantifier-free formulas over signature $(P',C')$. The formula $\phi_{\text{domain}}(x_1,\ldots,x_l)$ defines the domain $\bbD=\{(\d_1,\ldots,\d_l) \mid \D' \models \phi_{\text{domain}}(\d_1,\ldots,\d_l)\}$. Then, for each constant symbol $c \in C$, the formula $\phi_{c}(x_1,\ldots,x_l)$ defines the encodings\footnote{Note that we do not assume the encoding to be unique.} of $c$ as the tuples $(\d^c_1,\ldots,\d^c_l)\in \bbD$ that satisfy $\phi_c$, i.e.\ such that $\D' \models \phi_{c}(\d^c_1,\ldots,\d^c_l)$. Finally, for each predicate $p \in P$ of arity $a$ (including $=$), the formula $\phi_{p}(x_1^1,\ldots,x_l^1,\ldots,x_1^a,\ldots,x_l^a)$ defines the predicate $p^{\D}=\left\{(\d_1^1,\ldots,\d_l^1,\ldots,\d_1^a,\ldots,\d_l^a)\,\middle|\, \D' \models \phi_{R}(\d_1^1,\ldots,\d_l^1,\ldots,\d_1^a,\ldots,\d_l^a)\right\}$.
\begin{lemma}
  \label{lem:Z_QFI_N}
  $(\mathbb{Z},<,0)$ can be defined as a 2-dimensional interpretation of $(\mathbb{N},<,0)$.
\end{lemma}
\begin{proof}
  The encoding consists of two copies of $\bbN$,
  one for positive and one for negative integers,
  whose order is reversed.
  Formally,
  $\phi_{domain}(x_1,x_2) \coloneqq x_1 = 0 \lor x_2 = 0$.
  Then, $\phi_0(x_1,x_2) \coloneqq x_1 = 0 \land x_2 = 0$;
  $\phi_=((x_1,x_2),(y_1,y_2)) \coloneqq x_1 = y_1 \land x_2 = y_2$ and
  $\phi_<((x_1,x_2),(y_1,y_2)) \coloneqq
   (x_2 = y_2 = 0 \land x_1 < y_1) \lor
   (x_1 = y_1 = 0 \land x_2 > y_2) \lor
   (x_1 = 0 \land y_1 > 0)$.
   Then, $(\bbZ,<,0)$ is isomorphic to this structure,
   through the bijection $n \geq 0 \mapsto (n,0)$ and $n < 0 \mapsto (0,-n)$.
\end{proof}

More generally, $d$-uples of integers can be easily encoded. In the following, we fix $d \geq 1$.
For $(n_1,...,n_d),(m_1,...,m_d) \in \bbZ^d$,
define $(n_1,...,n_d) <^d (m_1,...,m_d)$ iff for all $i \in \{1,...,d\}$,
$n_i \leq m_i$ and $n_j < m_j$ for some $j \in \{1,\dots,d\}$; it is a partial
order on $\bbZ^d$. The predicate $=^d$ is defined as expected.
\begin{lemma}
  \label{lem:Zk_QFI_Z}
  $(\bbZ^d,=^d,<^d,0^d)$ can be defined as a $d$-dimensional interpretation of $(\bbZ,<,0)$.
\end{lemma}
\begin{proof}
  Any tuple belongs to the domain, so we let $\phi_{\text{domain}} \coloneqq \top$. Then, $\phi_0(x_1,\dots,x_d) \coloneqq \bigwedge_{1 \leq i \leq d} x_i = 0$, $\phi_=((x_1,\dots,x_d),(y_1,\dots,y_d)) \coloneqq \bigwedge_{1 \leq i \leq d} x_i = y_i$, and similarly for $\phi_<$.
\end{proof}

The following theorem allows us to lift our results to the two domains above:
\begin{theorem}
  \label{thm:QFI_red}
  If $\D$ is a quantifier-free interpretation over $\D'$, then $\D$ reduces to $\D'$.
\end{theorem}
\begin{proof}[Proof (Sketch)]
  We outline the proof, and refer to Appendix~\ref{app:thm:QFI_red} for details. Let $\D' = (\bbD',P',C')$ be a data domain, and $\D$ be an interpretation over $\D'$ of dimension $l \geq 1$ with signature $(P,C)$. The main idea is, given a set of registers $R$, to consider $l$ copies of this set, meant to store each dimension of the interpretation. We also add $l$ copies of $C$ to store the encoding of constants, and, since tests are conducted before assignment, $l$ registers to store each component of the input tuple. Overall, an action word $\a$ over $R$ is sent to one over $(R \cup C \cup \{d\}) \times \{1,\dots,l\}$, where $d$ is a fresh register variable. Then we construct the sought relation $K$ as follows: first, it prefixes its image with a sequence of actions that store the encoding of constants in the corresponding registers, check that they indeed satisfy their respective $\phi_{c}$, and ensure that all registers in $R \times \{1,\dots,l\}$ are initialised with the encoding of $c_0$. Note that the formulas are not necessarily conjuncts, so we put them in disjunctive normal form and consider all tests that are conjuncts of the DNF. Then, each action is processed separately: an action $(\tst,\asgn)$ of $\a$ is associated with a sequence of $2l+1$ actions that consist in reading each component of the input data value $\indata$, store it in the corresponding copy of $d$, check that $\indata$ indeed belongs to the domain ($\phi_{\text{domain}}$), and that it satisfies $\tst$ (using the $(\phi_p)_{p \in P}$ to encode the predicates). Again, this implies converting the formulas in DNF, so a given action is in general associated with multiple ones. Since $K$ consists in adding a prefix and then processing each action separately, it is rational. Moreover, it preserves feasibility; more precisely for any action word $\a$, each of its corresponding data word can be associated with its encoding in $K(\a)$.
\end{proof}

By Theorems~\ref{thm:dec_synt_N},~\ref{thm:QFI_red} and~\ref{thm:reducibility_synthesis}, as well as Lemma~\ref{lem:Z_QFI_N}, we get:
\begin{corollary}\label{coro:decZ}
    Register-bounded synthesis is decidable for $(\bbZ,<,0)$.
\end{corollary}

Then, since $(\bbZ,<,0)$ reduces to $(\bbN,<,0)$, and reducibility is transitive, we get, by Lemma~\ref{lem:Zk_QFI_Z} and Theorems~\ref{thm:QFI_red} and~\ref{thm:reducibility_synthesis}:
\begin{corollary}\label{coro:decZk}
    Register-bounded synthesis is decidable for $(\bbZ^d,=^d,<^d,0^d)$.
\end{corollary}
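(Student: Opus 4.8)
The plan is to chain two reductions already established in the excerpt together with the decidability results for $(\bbN,=,<,0)$. Concretely, Lemma~\ref{lem:Zk_red_Z} gives that $(\bbZ^d,=^d,<^d,0^d)$ reduces to $(\bbZ,=,<,0)$, and Lemma~\ref{lem:Z_red_N} gives that $(\bbZ,=,<,0)$ reduces to $(\bbN,=,<,0)$. By the Remark stating that reducibility is transitive (because rational relations are closed under composition and feasibility-preservation is transitive), we obtain that $(\bbZ^d,=^d,<^d,0^d)$ reduces to $(\bbN,=,<,0)$.

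Next I would invoke Theorem~\ref{thm:dec_synt_N}, which establishes that $(\bbN,=,<,0)$ is regapprox (indeed, Fact~\ref{lem:N-goodness-atm} provides an explicit witness $\QF_R$, and Theorem~\ref{thm:dec_synt_N} asserts decidability of register-bounded synthesis there). Having a reduction from $(\bbZ^d,=^d,<^d,0^d)$ to a regapprox domain, Theorem~\ref{thm:reducibility_synthesis} immediately yields that register-bounded synthesis is decidable for $(\bbZ^d,=^d,<^d,0^d)$, and moreover that for every positive instance one can effectively construct a realizing register transducer. This is exactly the statement of Corollary~\ref{coro:decZk}.

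There is essentially no obstacle left to overcome at this point: all the technical work has been localized in Lemmas~\ref{lem:Z_red_N} and~\ref{lem:Zk_red_Z}, whose proofs construct the relevant morphic rational relations and check feasibility preservation via the explicit $\bbZ$-as-two-copies-of-$\bbN$ encoding and the $d$-uple-splitting encoding respectively. The only thing to be careful about in assembling the corollary is to make sure the hypotheses of Theorem~\ref{thm:reducibility_synthesis} are literally met, namely that the target domain of the composed reduction is regapprox — which it is, since $(\bbN,=,<,0)$ is regapprox by Fact~\ref{lem:N-goodness-atm} — and that transitivity of reducibility is applicable, which is guaranteed by the Remark. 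Thus the proof is a one-line composition of previously established results.

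\begin{proof}
  By Lemma~\ref{lem:Zk_red_Z}, $(\bbZ^d,=^d,<^d,0^d)$ reduces to $(\bbZ,=,<,0)$, which in turn reduces to $(\bbN,=,<,0)$ by Lemma~\ref{lem:Z_red_N}. Since reducibility is transitive (see the Remark), $(\bbZ^d,=^d,<^d,0^d)$ reduces to $(\bbN,=,<,0)$. By Fact~\ref{lem:N-goodness-atm}, the domain $(\bbN,=,<,0)$ is regapprox. Applying Theorem~\ref{thm:reducibility_synthesis}, register-bounded synthesis is decidable for $(\bbZ^d,=^d,<^d,0^d)$, and for every positive instance one can effectively construct a register transducer realizing the specification.\qed
\end{proof}
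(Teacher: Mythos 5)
Your proposal is correct and follows essentially the same route as the paper: chain Lemma~\ref{lem:Zk_red_Z} with Lemma~\ref{lem:Z_red_N} via transitivity of reducibility, then invoke regapproximability of $(\bbN,=,<,0)$ (Fact~\ref{lem:N-goodness-atm}) together with Theorem~\ref{thm:reducibility_synthesis}. No gaps.
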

\begin{remark*}
  One can similarly show that $\bbN^d$ reduces to $\bbN$. More generally, the above method allows one to lift decidability of register-bounded synthesis to tuples of data values where predicates are applied component-wise. Besides, note that $\bbN^d$ also reduces to $\bbZ^d$, by restricting $\bbZ^d$ to nonnegative values.
\end{remark*}

\subsection{Finite Strings with the Prefix Relation}

In this section, we show that synthesis is decidable over the data domain
$(\Sigma^*, =, \prec, \epsilon)$, where $\Sigma$ is a finite alphabet and
$\prec$ denotes the prefix relation, leveraging a result
of~\cite{DD16} that encodes prefix constraints as integer ones. This still requires some work, as we cannot use the notion of interpretation: a string valuation is encoded as an integer valuation with a \emph{quadratic} number of registers. In the
sequel, $\Sigma$ is a fixed finite set of size $l \geq 2$.

First, $(\Sigma^*, =, \prec, \epsilon)$ reduces to the richer
domain $(\Sigma^*, =, \clen_=, \clen_<, \epsilon)$, where, given $u,v \in \Sigma^*$,
$\clen(u,v)$ denotes the length of the longest common prefix of $u$ and $v$,
and, for $\tr \in \{<,=\}$,
$\clen_{\tr}(u,v,u',v')$ holds whenever $\clen(u,v) \tr \clen(u',v')$. The
reduction is direct, and follows the same lines as \cite[Lemma~3]{DD16}: $u
\prec v$ is encoded as $(\clen(u,u) = \clen(u,v)) \land (\clen(u,u) < \clen(v,v))$, and $K$ is a morphism
on tests and the identity over assignments.
\begin{lemma}
  $(\Sigma^*, =, \prec, \epsilon)$ reduces to $(\Sigma^*, =, \clen_=, \clen_<, \epsilon)$.
\end{lemma}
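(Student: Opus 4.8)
The plan is to exhibit a morphism $K$ on action words that turns prefix constraints into common-length constraints, and argue feasibility is preserved in both directions. I would first fix a set of registers $R$ and keep the same register set $R' = R$; the relation $K$ will be defined letter-by-letter (on pairs $(\tst,\asgn)$) and extended homomorphically, so it is automatically rational. On assignments, $K$ is the identity: storing the incoming word is the same operation in both domains. On tests, the key observation (following \cite[Lemma~3]{DD16}) is that the prefix relation is definable from the common-length function: for $u,v\in\Sigma^*$, $u \prec v$ holds iff $\clen(u,u) = \clen(u,v)$ (equivalently, $u$ is a prefix of $v$ iff $u$ and $v$ agree on exactly $|u|$ letters), and $u = v$ iff $u \prec v$ and $v \prec u$, i.e. $\clen(u,u)=\clen(v,v)=\clen(u,v)$. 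Hence every atom $x \prec y$ (resp.\ $x = y$) over $x,y \in R\cup C\cup\{\star\}$ in a test $\tst$ of the source domain is translated to the corresponding conjunction of $\clen_=$-atoms in the target test $\tst'$, and negations are handled by passing to a maximally consistent completion: concretely, $\tst'$ is any test of $(\Sigma^*,=,\clen_=,\clen_<,\epsilon)$ over $R'$ whose projection onto the $\{=,\prec\}$-vocabulary (via the above encoding) is exactly $\tst$. Since maximally consistent tests exist, $K(\tst)$ is a nonempty finite set, and $K$ as a morphism is a rational relation.

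Next I would verify feasibility preservation. For the forward direction, suppose $\a = (\tst_0,\asgn_0)(\tst_1,\asgn_1)\ldots$ is feasible in $(\Sigma^*,=,\prec,\epsilon)$ witnessed by $(\v_0,\d_0)(\v_1,\d_1)\ldots$ with $\v_i : R \to \Sigma^*$ and $\d_i\in\Sigma^*$. The very same sequence of valuation--data pairs is a witness that some $\a' \in K(\a)$ is feasible in $(\Sigma^*,=,\clen_=,\clen_<,\epsilon)$: each $(\v_i,\d_i)$ determines a unique maximally consistent test $\tst_i'$ over the richer signature, and by the encoding $\tst_i'$ projects down to $\tst_i$, so $(\tst_i',\asgn_i) \in K$-image of $(\tst_i,\asgn_i)$; moreover the $\update$ dynamics are identical since $\asgn_i$ is unchanged and $\v_0 = \epsilon^{R}$ in both domains. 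Thus $\a'=(\tst_0',\asgn_0)(\tst_1',\asgn_1)\ldots \in K(\a)$ is feasible in the target. For the converse, suppose some $\a'\in K(\a)$ is feasible in $(\Sigma^*,=,\clen_=,\clen_<,\epsilon)$ by $(\v_0,\d_0)(\v_1,\d_1)\ldots$; then the identical sequence witnesses feasibility of $\a$ in $(\Sigma^*,=,\prec,\epsilon)$, because whenever $(x\prec y)\in\tst_i$ we have, by definition of $K$, $\clen_=$-atoms in $\tst_i'$ forcing $\clen(\v_i'(x),\v_i'(x)) = \clen(\v_i'(x),\v_i'(y))$, which by the encoding lemma is exactly $\v_i'(x) \prec \v_i'(y)$ (and similarly for $=$ and for negated atoms, which hold by maximal consistency). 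Hence all literals of $\tst_i$ are satisfied by $(\v_i,\d_i)$, and the run dynamics agree.

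The main obstacle is purely bookkeeping rather than conceptual: one must check that the encoding of $\prec$ and $=$ via $\clen_=$ is faithful not only on atoms but also on \emph{negated} atoms and on atoms involving the constant $\epsilon$ and the placeholder $\star$, and that a maximally consistent target test always restricts to a maximally consistent source test. The first point relies on the elementary fact that the common-length function fully determines the prefix partial order on any finite set of strings (two strings $u,v$ satisfy exactly one of $u\prec v$, $v\prec u$, $u=v$, or incomparability, and each case is characterised by the triple $(\clen(u,u),\clen(v,v),\clen(u,v))$); the constant $\epsilon$ is handled by noting $\clen(\epsilon,w)=0$ for all $w$. The second point is immediate since projection of a consistent assignment of truth values is consistent. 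I would relegate these verifications to Appendix~\ref{app:applications}, as the statement's proof sketch already indicates, and in the main text just state that $K$ is a morphism on tests and the identity on assignments, with feasibility preservation following the lines of \cite[Lemma~3]{DD16}.
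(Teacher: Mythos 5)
Your proposal is correct and follows essentially the same route as the paper: the encoding $u \prec v \Leftrightarrow \clen(u,u) = \clen(u,v)$ from \cite[Lemma~3]{DD16}, with $K$ a morphism on tests (hence rational) and the identity on assignments, and feasibility preserved in both directions by the very same valuation--data sequences. The only cosmetic remark is that $=$ is already a predicate of the target domain, so equality atoms can be kept verbatim rather than re-encoded via $\clen_=$.
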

Note also that satisfiability of tests over both domains is decidable, and NP-complete~\cite[Lemma~7]{DD16}.
%
It now remains to show that $(\Sigma^*, =, \clen_=,\clen_<, \epsilon)$ reduces to
$(\bbN,=,<,0)$. The proof draws on ideas similar to that
of~\cite[Lemmas~8,9]{DD16}, which mainly relies on~\cite[Lemmas~5,6]{DD16}.
Here, it remains to lift them to our synthesis framework, and ensure that
feasibility is preserved despite the dependencies induced by registers.
\begin{lemma}
  \label{lem:prefix_red_N}
  $(\Sigma^*,=, \clen_=, \clen_<, \epsilon)$ reduces to $(\bbN,=, <, 0)$.
\end{lemma}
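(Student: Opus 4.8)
The goal is to reduce $(\Sigma^*, =, \clen_=, \clen_<, \epsilon)$ to $(\bbN, =, <, 0)$: given a set $R$ of registers, build a set $R'$ of integer registers and a rational, feasibility-preserving relation $K$ from $R$-action words over strings to $R'$-action words over $\bbN$. The combinatorial heart, following~\cite{DD16}, is that a string valuation is determined, up to the actual letters, by the \emph{pattern of common-prefix lengths} among the stored strings and the incoming datum. So the plan is to encode each string register $r$ by an integer register $\ell_r$ tracking a length, plus, for each unordered pair $\{r,s\}$, information about $\clen$-values; since tests are maximally consistent, the set of pairwise $\clen$-orderings is exactly what a test over $R$ specifies. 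A test $\tst$ over $R\cup\{\star\}$ is translated into integer tests $\tst'$ over $R'\cup\{\star\}$ that assert the corresponding $=$/$<$ relations between the integer-encoded quantities, and no condition is imposed on atoms not determined by $\tst$ (mirroring Lemmas~\ref{lem:Z_red_N} and~\ref{lem:Zk_red_N}). Assignments are translated by updating the relevant integer registers (the length of $\star$ and the $\clen$-data with each remaining register) exactly as dictated by $\tst$. As the translation acts letter-by-letter and is then extended homomorphically, $K$ is a morphism, hence rational.

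The two directions of feasibility preservation are where the real work lies. For the easy direction, if an $R$-action word $\a$ over $\Sigma^*$ is feasible by a data word $u_0 u_1 \dots$, one reads off the induced sequence of string valuations, extracts the lengths and pairwise common-prefix lengths, checks these satisfy all the integer tests produced by $K$, and thereby exhibits a feasible $R'$-action word in $K(\a)$. For the converse — the genuinely delicate part — one is given a feasible $R'$-action word $\b \in K(\a)$ witnessed by an $\bbN$-valued data word, and must \emph{reconstruct} actual strings realising $\a$. This is exactly the content of the realisability lemmas~\cite[Lemmas~5,6]{DD16}: from a consistent infinite sequence of $\clen$-constraints one builds, step by step, strings over $\Sigma$ (of size $l \geq 2$, which is why $|\Sigma|\geq 2$ is assumed) that meet all the constraints; at each step one extends or prunes the stored strings and picks fresh branching letters so that the prescribed common-prefix lengths and the equalities/inequalities hold. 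The subtlety our setting adds over~\cite{DD16} — and the part I expect to be the main obstacle — is the \emph{coupling through registers}: an assignment freezes a string into a register, so later choices of fresh letters must remain compatible with that frozen value, and the integer encoding must have recorded enough (not merely lengths, but the full pairwise $\clen$ structure, including $\clen$ with $\star$) to make such a reconstruction always possible. One must check that the translation $K$ indeed preserves enough information — in particular that after an assignment the new register's $\clen$-relations with all others are forced by $\tst$ — and that the inductive string-construction never gets stuck.

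Concretely I would proceed as follows. First, fix notation for the integer registers $R'$ (a length register per string register, plus $\clen$-tracking registers for pairs, and analogous tracking with $\star$), and spell out $K$ on a single (test, assignment) pair, then extend homomorphically; conclude rationality. Second, prove the forward implication by the straightforward read-off argument above. Third, state precisely the reconstruction lemma we need — the analogue of~\cite[Lemmas~5,6]{DD16} adapted so that a distinguished subset of strings may be ``frozen'' by past assignments — and prove it by induction on the length of the finite prefix processed, maintaining the invariant that the current strings realise all constraints seen so far and that any length/$\clen$ value can still be increased in the next step (using $l\geq 2$ to create fresh branches). Fourth, use this lemma to turn the integer witness for $\b$ into a string witness for $\a$, completing feasibility preservation. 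I expect the bookkeeping in the third step, reconciling frozen registers with the free extension of strings, to be where the technicality concentrates; the rest is routine, and the full argument is deferred to Appendix~\ref{app:applications}.
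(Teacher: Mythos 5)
Your proposal follows essentially the same route as the paper: introduce one integer register per unordered pair over $R\cup\{\inWord\}$ holding the longest-common-prefix lengths (with $\clenR{r}{r}$ playing the role of your length registers), translate each test letter-by-letter so that $K$ is a morphism and hence rational, get the forward direction by reading off the $\clen$ values of a string witness, and get the converse by rebuilding strings step by step via \cite[Lemmas~5,6]{DD16}. The one place where you diverge is the converse direction: you plan to prove a new reconstruction lemma "adapted so that a distinguished subset of strings may be frozen", with an invariant that values can always be increased later. The paper avoids any such adapted lemma by two devices. First, it requires the image test $\tst_{\clen}$ to explicitly imply the string-compatibility formulas $\psi_I\wedge\psi_{II}\wedge\psi_{III}$ of \cite[Proposition~2]{DD16} (in your setting these are in fact consequences of mirroring the maximally consistent source test, but making them explicit lets one invoke \cite{DD16} verbatim). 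Second, in the backward direction it reuses the \emph{actual} integer values of the feasible image word as the prescribed common-prefix lengths and applies \cite[Lemma~6]{DD16} at each step with only the single fresh variable $\inWord$ to instantiate: the previously constructed register strings are simply carried over unchanged, so the "frozen registers" difficulty you anticipate dissolves, and no room-making invariant about leaving numeric gaps (which would be delicate over $\bbN$) is needed. Your plan should go through, but if you pursue it, the cleanest execution is the paper's: keep the witness's integer values as targets rather than re-realising only the order pattern, since the latter is exactly where your inductive bookkeeping would concentrate.
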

\begin{proof}
  We describe the main ideas of the proof; a full proof can be found in
  Appendix~\ref{app:lem:prefix_red_N}. %
  From~\cite[Lemma~5,6]{DD16}, we know that a string valuation is characterised
  by the length of the longest common prefixes of all its pairs of values, when
  prefix constraints are concerned. This allows to encode $\Sigma^*$ in $\bbN$:
  given a set $R$ of registers, we introduce a register $\clenR{r}{s}$ for each
  $(r,s) \in R' = (R \cup \{\inWord\})^2$, where $\inWord$ is an additional
  register name that denotes the input data value $\star$ in $\Sigma^*$. Along
  the execution, a register 
  $\clenR{r}{s}$ is meant to contain $\clen(\v(r),\v(s))$. Note that in
  particular, $\clenR{r}{r}$ contains the length of the word stored in $r$. At
  each step, we read a sequence of $\size{R}$ integers that each corresponds to
  the value of $\clen(\star,r)$ for some $r \in R$, that we store in the corresponding
  register $\clenR{\star}{r}$. We then check that they satisfy the $\clen$
  constraints, as 
  well as the properties of~\cite[Proposition 2]{DD16}. The latter consist
  in logical formulas that can be encoded as tests in $(\bbN,=,<,0)$, as they
  only use $=$ and $<$.

  Using~\cite[Lemma~6]{DD16}, from a sequence of integer valuations (called
  \emph{counter valuations} in~\cite{DD16}) that satisfy those properties, we
  can reconstruct a sequence of string valuations. As the integer valuations
  additionally satisfy the $\clen$ constraints, so does the string
  valuations. Thus, if an image $R'$-action word is
  feasible, the original action word is feasible. The converse direction is
  easier: given a sequence $\v_0 \v_1 \dots$ of string valuations that is
  compatible with the $R$-action word, at step $i$ one fills each $\clenR{r}{s}$
  with $\clen(\v_i(r),\v_i(s))$.
\end{proof}
By Theorems~\ref{thm:dec_synt_N} and~\ref{thm:reducibility_synthesis}, we get:
\begin{corollary}\label{coro:decpref}
    Register-bounded synthesis is decidable for $(\Sigma^*,=,\prec,\epsilon)$.
\end{corollary}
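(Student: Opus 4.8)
The plan is to derive Corollary~\ref{coro:decpref} by mere composition of results already in place, since by this point $(\Sigma^*,=,\prec,\epsilon)$ has been linked to $(\bbN,=,<,0)$ through two reductions. First I would assemble the reduction chain: the (unlabelled) lemma immediately above states that $(\Sigma^*,=,\prec,\epsilon)$ reduces to the auxiliary domain $(\Sigma^*,=,\clen_=,\clen_<,\epsilon)$, and Lemma~\ref{lem:prefix_red_N} states that this auxiliary domain reduces to $(\bbN,=,<,0)$. Since reducibility is transitive---rational relations are closed under composition and feasibility preservation composes---these two combine to give that $(\Sigma^*,=,\prec,\epsilon)$ reduces to $(\bbN,=,<,0)$.

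Next I would invoke that $(\bbN,=,<,0)$ is $\omega$-regularly approximable (Fact~\ref{lem:N-goodness-atm}), and then apply Theorem~\ref{thm:reducibility_synthesis} with $\bbD = (\Sigma^*,=,\prec,\epsilon)$ and the regapprox target $(\bbN,=,<,0)$: this immediately yields decidability of register-bounded synthesis for $(\Sigma^*,=,\prec,\epsilon)$, together with the effective construction of a realising register transducer for every positive instance. I would also note, to be rigorous, that $(\Sigma^*,=,\prec,\epsilon)$ is a legitimate data domain in the sense of Section~\ref{sec:problem}, because quantifier-free satisfiability over it is decidable (indeed NP-complete), so the objects $\Tst_R$, $\F$, and so on, on which everything rests are well defined. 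Internally, Theorem~\ref{thm:reducibility_synthesis} runs through Lemma~\ref{lem:dectransfer}, which transports $\omega$-regular approximability along the reduction by taking $\QF = K^{-1}(\QF')$ for the composed relation $K$, and then through the general decidability Theorem~\ref{thm:gendec}; no further argument is needed at the level of the Corollary.

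Consequently the Corollary has no genuine obstacle of its own---all the difficulty has already been spent inside Lemma~\ref{lem:prefix_red_N}, which is the real work. There, following~\cite{DD16}, a string valuation over $R$ is encoded in $(\bbN,=,<,0)$ by the integers $\clenR{r}{s} = \clen(\v(r),\v(s))$ ranging over $(r,s)\in (R\cup\{\inWord\})^2$, subject to arithmetic reformulations of the longest-common-prefix properties of~\cite{DD16}; the delicate point there is not to characterise a single valuation but to preserve \emph{feasibility}, i.e.\ to show that any $\omega$-sequence of integer valuations satisfying those properties can be converted into an $\omega$-sequence of string valuations consistent with the register assignments of the original action word (the converse being routine). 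Granting Lemma~\ref{lem:prefix_red_N}, transitivity of reducibility plus Theorem~\ref{thm:reducibility_synthesis} leave nothing more to establish for Corollary~\ref{coro:decpref}.
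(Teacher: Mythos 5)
Your proposal is correct and follows exactly the paper's route: chain the two reductions (prefix domain to the $\clen$ domain, then Lemma~\ref{lem:prefix_red_N} to $(\bbN,=,<,0)$) via transitivity, then conclude by regapproximability of $(\bbN,=,<,0)$ (Fact~\ref{lem:N-goodness-atm}, underlying Theorem~\ref{thm:dec_synt_N}) together with Theorem~\ref{thm:reducibility_synthesis}. Nothing differs in substance from the paper's derivation of the corollary.
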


\begin{remark}[Complexity analysis]\label{rem:complexity}
Note that the data domains in
Corollaries~\ref{coro:decZ},~\ref{coro:decZk} and~\ref{coro:decpref}
all reduce to $(\bbN,<, 0)$ (all via some rational relations $K$
depending on a set of registers $R$). The time complexities of those
corollaries depend on the complexities of constructing, given a set of
registers $R$, a nondeterministic B\"uchi automaton recognising
$K^{-1}(\QF_R^{(\bbN,<, 0)})$ for all the rational relations $K$
defined in the proofs of those corollaries. It
can be seen from those proofs that for any such rational relation $K$,
it is possible to construct a nondeterministic B\"uchi transducer $A_K$
with polynomially many states in $|R|$ recognising $K$. By taking the synchronized
product of $A_K$ with a nondeterministic automaton recognising
$\QF_R^{(\bbN,<, 0)}$, say of size $n_{qf}$, and by projecting it on
its inputs, one obtains a nondeterministic B\"uchi automaton
recognising $K^{-1}(\QF_R^{(\bbN,<, 0)})$. It can be computed in time
$poly(n_{qf})$. By Fact~\ref{lem:N-goodness-atm} and
Theorem~\ref{thm:gendec}, one gets that the time complexities of
$k$-register-bounded synthesis for data domains $(\mathbb{Z},=,<,0)$,
$(\bbZ^d,=^d,<^d,0^d)$ (for a fixed $d$) and
$(\Sigma^*,=,\prec,\epsilon)$ is doubly exponential in $k$ and $r$ the
number of registers of the specification, and singly exponential in
the number of states of the URA and its number of priorities.
\end{remark}

\section{Conclusion}
We have shown that register-bounded synthesis from
specifications expressed by universal register-automata over $(\bbN, <, 0)$ is decidable within the same time complexity class as the case of URA
over $(\bbN, =)$, completing the picture on synthesis from
register automata over $(\bbN, =)$ and $(\bbN, <,0)$: (unbounded) synthesis is undecidable
for nondeterministic register automata~\cite{EFR21}, decidable for
deterministic register automata over $(\bbN, =)$~\cite{EFR21} and over
$(\bbN, <)$~\cite{EFK21}, and register-bounded synthesis is decidable for URA over
$(\bbN, =)$~\cite{KMB18,EFR21,KK19} and $(\bbN, <,0)$ (this paper), and undecidable for
nondeterministic register automata~\cite{EFR21}.
We also get decidability for the data domains of integers, of tuples
of integers and of finite words with the prefix relation, by reducing
them to $(\bbN,<,0)$. A simple complexity analysis
(Remark~\ref{rem:complexity}) yields a doubly exponential decision
procedure for register-bounded synthesis over these
domains. Systematising this complexity analysis calls for a notion of
polynomial reduction between data domains, that we leave for future
work.

There are other challenging future research directions:
first, universal automata, as argued in the introduction, are well
suited for synthesis, and have been show in the register-free setting
to be amenable to synthesis procedures which are feasible in
practice~\cite{KV05c,SF07,FJR09,DBLP:reference/mc/BloemCJ18}.
We plan
on investigating extensions of these works to the register setting. In
particular, our synthesis algorithm first reduces the problem to a
synthesis problem over a \emph{finite} alphabet with a specification
given by a universal co-B\"uchi automaton. The latter problem is
classically solved by reduction to a parity game obtained by
determinising the universal co-B\"uchi automaton, e.g.\ by using
Safra's determinization procedure. It is an interesting question whether Safraless
procedures from~\cite{KV05c,SF07,FJR09} could be combined with our
game reduction to get more practical algorithms. Another challenging research direction is to consider
synthesis problems from logical specifications instead of
automata, as the nice correspondences between automata and logics for
word languages over finite alphabets do not carry over to data
words. Nevertheless, URA encompass Constraint LTL~\cite{ST11}, and we believe their expressive power could allow one to target other temporal-like logics with data.

\bibliography{refs}
\newpage

\appendix

\section{Detailed Proofs of Section~\ref{sec:dec-recipe}}

\subsection{Proof of Lemma~\ref{lem:Wqf-is-regular}}\label{app:lem:Wqf-is-regular}

\begin{proof}
  \newcommand\dualSs{\widetilde{S}_\mathit{synt}}
  \newcommand\AQF{\mathit{QF}}
  We prove the second claim, implying the first one.
  Let $R = R_S\uplus R_k$.
  Let $\AQF$ be a nondet.\ B\"uchi automaton recognising $\QF_R$ with the sizes mentioned by the lemma.
  Note that $|\Tst_R|$ is $exp(r,k)$ and $|\Asgn_R| = 2^{r+k}$ is $exp(r,k)$.
  Recall that the specification automaton $S$ works on words
  that interleave input and output data of transducers.
  We make this explicit by assuming the states of $S$ are partitioned into input and output states
  for reading input and output data.
  The automaton $S$ can be transformed into this form at the cost of doubling the number of states and in time $poly(n,exp(r))$.
  The transitions from input states are called input transitions,
  from output states -- on output transitions.
  The input alphabet is that of the possible letters on input transitions,
  output alphabet -- on output transitions.
  We now construct the sought universal automaton for $\Wqf$.

  \textbf{1.}
  Construct the nondet.\ B\"uchi automaton $\AQF'$ from $\AQF$ as follows:
  replace every input transition labelled $(\tst,\asgn)\in\Tst_R\x\Asgn_R$
  by transitions labelled $(\tst,\asgn,r^k)$ for every $r^k\in R_k$.
  Hence, the projection of $L(\AQF')$ on $\Tst_R\x\Asgn_R$ equals $\QF_R$.
  Thus, the automaton $\AQF'$ can be constructed in time $poly(N,exp(r,k))$.
  We assume the transition relation of $\AQF'$ is complete
  (to be able to complement it via dualising).
  If it is not complete,
  we can make it such in time $poly(N,exp(r,k))$.

  \textbf{2.}
  Let $\Ss$ be the syntactic view of $S$;
  it is a universal co-B\"uchi automaton with input and output finite alphabets $\Tst_S\x\Asgn_S$.
  We assume that the transition relation of $\Ss$ is complete,
  which can be enforced in time $poly(n,exp(r))$.
  Let $\dualSs$ be the nondet.\ parity automaton dual to $\Ss$,
  thus $L(\dualSs) = \overline{\Ss}$.
  We construct $\dualSs'$ from $\dualSs$ as follows:
  extend the input and output alphabets of $\dualSs$
  from $\Tst_S\x\Asgn_S$ to $\Tst_R\x\Asgn_R$ and to $\Tst_R\x\Asgn_R\x R_k$,
  while preserving the original literals,
  formally as follows.
  Every input transition of $\dualSs$ labelled $(\tstI,\asgnI) \in \Tst_S\x\Asgn_S$
  is replaced by the output transitions labelled $({\tstI}',{\asgnI}') \in \Tst_R\x\Asgn_R$
    s.t.\
    $\tstI\subseteq {\tstI}'$ and $\asgnI\subseteq {\asgnI}'$.
  Similarly,
  every output transition labelled $(\tstO,\asgnO)\in\Tst_S\x\Asgn_S$
  is replaced by the output transitions labelled $({\tstO}',{\asgnO}',r) \in \Tst_R\x\Asgn_R\x R_k$
    s.t.\
    $\tstO\subseteq {\tstO}'$ and $\asgnO\subseteq {\asgnO}'$.
  This can be done in time $poly(n,exp(r,k))$.

  \textbf{3.}
  We now construct $\AQF' \land \dualSs'$.
  First, translate the nondet.\ parity automaton $\dualSs'$ into a nondet.\ B\"uchi automaton with $O(nc)$ states.
  Then, build the product of all automata and get a nondet.\ B\"uchi automaton with $O(Nnc)$ states.
  This can be done in time $poly(N,n,exp(r,k))$.
  Notice that every word accepted by $\AQF' \land \dualSs'$
  is a product of some transducer action word $\ak$ and automaton action word $\aS$ (equipped with labels $r_k$ on output actions to keep track of the output register) such that $\aS$ is rejected by $\Ss$.

  \textbf{4.}
  Project $\AQF' \land \dualSs'$
  into the input alphabet $\Tst_k\x\Asgn_k$ and output alphabet $R_k$.
  This does not affect the number of states and can be done in time $poly(N,n,exp(r,k))$.

  \textbf{5.}
  Shift the component $\Asgn_k$ from the input alphabet to the output alphabet,
  which multiplies the number of states by $|\Asgn_k| = 2^k$,
  so the number of states becomes $O(2^kNnc)$.
  This can be done in time $poly(N,n,exp(r,k))$.
  Call the result $A'$.
  Note that $L(A') = \overline{\Wqf}$.

  \textbf{6.}
  Finally, we treat the nondet.\ B\"uchi automaton $A'$ as universal co-B\"uchi,
  and this is a sought universal co-B\"uchi automaton with $O(2^kNnc)$ many states.

  \smallskip\noindent
  The correctness should be clear as the construction follows the definition of $\Wqf$.
\end{proof}

\section{Detailed Proofs of Section~\ref{sec:applications}}
\label{app:applications}

\subsection{Proof of Lemma~\ref{lem:lab_red_nolab}}
\begin{proof}
    It remains to show that the construction is correct. First, $K$ is rational: a finite
    transducer just has to first output $\overline{b_\Sigma}$ (which
    is independent from $\a$) and then, whenever it reads 
    $(\tst_i,\asgn_i)$ in $\a$, it outputs any
    $(\tst^{lab}_i,\emptyset)(\tst^{data}_i,\asgn_i)$ which satisfies the
    two latter conditions (there are only finitely many). Let us show
    that $K$ preserves feasibility. Suppose $\a$ is feasible
    by  $u = (\sigma_{i_1},\d_1)(\sigma_{i_2},\d_2)\dots$, then since $\D$ is
    assumed to be infinite, there exists an injective mapping $\mu :
    \Sigma\rightarrow \D$ such that $\mu(\sigma_0) = c_0$. Consider the word $v
    = \mu(\sigma_1)\dots 
    \mu(\sigma_n)\mu(\sigma_{i_1})\d_1\mu(\sigma_{i_2})\d_2\dots$. It
    can be checked that there exists a unique action word
    $\overline{b}$ feasible by $v$ of the form
    $\overline{b_\Sigma}.\overline{b_{\a}}$ and by
    construction of $K$ we have $(\a,\overline{b})\in K$. 

    Conversely, if $\overline{b}$ is feasible by some $v =
    e_1e_2\dots e_n\d'_1\d_1\d'_2d_2\dots$ such that
    $(\a,\overline{b})\in K$, then, by letting $e_0 = c_0$ we have, by
    construction of $K$, 
    that $e_i\neq e_j$ for all $0 \leq i < j \leq n$, and for all
    $i\geq 1$, there exists $0\leq j\leq n$ such that $\d'_i =
    e_j$. By letting $\mu : \sigma_i\mapsto e_i$ for all $0\leq i\leq
    n$, we get that $v$ is a $\mu$-encoding of
    $(\mu^{-1}(\d'_1),\d_1)(\mu^{-1}(\d'_2),\d_2)\dots$ which by
    construction of $K$ is a witness of feasibility of $\a$.
\end{proof}

\subsection{Proof of Theorem~\ref{thm:QFI_red}}\label{app:thm:QFI_red}
\begin{proof}
  Let $\D$ and $\D'$ be two data domains such that $\D$ is a quantifier-free interpretation of $\D'$ of dimension $l \geq 1$, with predicates $P$ and constants $C = \{c_0, \dots, c_m\}$, given by formulas $\phi_{\text{domain}}$, $\{\phi_{c} \mid c \in C\}$ and $\{\phi_p \mid p \in P\}$; without loss of generality we assume that $\phi_{c} \Rightarrow \phi_{domain}$ for all $c \in C$. Note that the formulas are not necessarily conjuncts (there can be disjunctions and negations), so a formula might induce multiple tests.

  Let $R$ be a set of registers; we assume without loss of generality that $R$ and $C$ are disjoint and let $d \notin (R \cup C)$ be a fresh register. We define $R' = (R \cup C \cup \{d\}) \times \{1,\dots,l\}$ and write $r^i$ for $(r,i)$. We need to define a feasibility-preserving rational relation $K$ between action words over $R$ and $R'$. The images of an action word $\a$ through $K$ consists in a prefix $\pi$ followed by $\kappa(\a)$, where $\kappa$ is defined at the level of action words and extended homomorphically.

  Prefixes are of the form $\pi = (\top, c_0^1 \cup R \times \{1\}) \dots (\top,c_0^l \cup R \times \{l\}) (\phi'_{c_0}(c_0^1,\dots,c_0^l),\varnothing) \cdot (\top,c_1^1) \dots (\top,c_1^l) (\phi'_{c_1}(c_1^1,\dots,c_1^l), \varnothing) \dots (\top,c_m^1) \dots (\top,c_m^l) (\phi'_{c_m}(c_m^1,\dots,c_m^l), \varnothing)$, where the $\phi'_{c}$ are tests (i.e. conjuncts) that are implied by $\phi_{c}$. It consists in reading the encoding of $c_0$ and storing it in registers $c_0^1, \dots, c_0^l$ as well as in registers of $R \times \{1, \dots, l\}$, and then reading the encoding of the other constants, storing them in the corresponding registers, and checking that the encodings are correct along the run.

  We now define $\kappa$ at the level of actions over $R$. The idea is to first read the $l$ elements of the tuple encoding the input data $\indata$, store it in the $l$ copies of register $d$, check that the read value indeed belongs to the domain and satisfies the test, and store it in the registers of $\asgn$ by reading it a second time. Formally, an action $(\tst,\asgn) \in \Tst_R \times \Asgn_R$ is associated with the sequence of $2l+1$ actions $(\top,d^1) \dots (\top, d^l) \cdot (\phi'_{\text{domain}}(d^1, \dots, d^l) \wedge \tau(\tst)) \cdot (\indata = d^1,\asgn \times \{1\}) \dots (\indata = d^l, \asgn \times \{l\})$, where $\phi'_{\text{domain}}(d^1, \dots, d^l)$ is a test that is a consequence of $\phi_{\text{domain}}(d^1, \dots, d^l)$, and where $\tau$ is defined on atomic tests as $\tau(p(x_1,\dots,x_a)) = \phi_p(x_1^1,\dots,x_1^l, \dots, x_a^1, \dots, x_a^l)$ for all predicates $p \in P$ of arity $a \in \bbN$, replacing $x$ with $d$ if $x = \indata$ and applied pointwise to each conjunct.

  Now, if $\a$ is feasible in $R$ by some sequence $(\d_0^1,\dots,\d_0^l) (\d_1^1,\dots,\d_1^l) \dots$, then $K(\a)$ is feasible by the sequence $c_0^1 \dots c_0^l e_0 \dots c_m^1 c_m^l e_m d_0^1 \dots d_0^l f_0 d_0^1 \dots d_0^l d_1^1 \dots d_1^l f_1 d_1^1 \dots d_1^l \dots$, where $e_0, f_0, e_1, f_1, \dots$ are arbitrary elements of $\bbD$. Conversely, assume that $K(\a)$ is feasible by some sequence $c_0^1 \dots c_0^l e_0 \dots c_m^1 c_m^l e_m d_0^1 \dots d_0^l f_0 {d'}_0^1 \dots {d'}_0^l d_1^1 \dots d_1^l f_1 {d'}_1^1 \dots {d'}_1^l \dots$. Since formulas are checked through the tests, the encoding of each constant $c \in C$ satisfies $\phi_c(c^1, \dots, c^l)$, and after reading $\pi$ registers in $R \times \{1,\dots,l\}$ contain the encoding of $c_0$. Moreover, we necessarily have $d_i^j = {d'}_i^j$. We also know that each $(d_i^1,\dots,d_i^l)$ satisfies $\phi_{\text{domain}}$, and that the tests are satisfied by construction of $\tau$, so it means that $\a$ is feasible by $(\d_0^1,\dots,\d_0^l) (\d_1^1,\dots,\d_1^l) \dots$, $K(\a)$. Moreover, $K$ is rational, as it it defined homomorphically, so $\D$ indeed reduces to $\D'$.
\end{proof}

\subsection{Proof of Lemma~\ref{lem:prefix_red_N}}\label{app:lem:prefix_red_N}
\begin{proof}
  Let $R = \{r_1,\dots,r_k\}$ be a finite set of registers, and let $\a$ be an
  action word over $R$. We let $R'$ be the set of unordered pairs over $R \cup
  \{\inWord\}$, where $\inWord$ is an additional variable name that aims at
  denoting the input word at each step. For readability, a register $\{r,s\} \in
  R'$ is denoted $\clenR{r}{s}$. Note that we only need unordered pairs, since
  $\clen$ is a symmetric function. For clarity, in the following, we import the
  terminology of~\cite{DD16}: a register valuation $\v: R \rightarrow \Sigma^*$ is
  called a \emph{string valuation}, while a register valuation $\v': R'
  \rightarrow \bbN$ is a \emph{counter valuation}. Here, we cannot use the notion of quantifier-free interpretation, since the encoding of a valuation over $R$ necessitates a number of registers that is quadratic in $R$, and interpretations only allow linear encodings.

  Each action $(\tst, \asgn)$ of $\a$
  translates to a sequence of actions as follows:
  \begin{itemize}
  \item First, we read the values of $\clen(\star,r)$ for each $r \in R \cup \{\inWord\}$, and
    store them. This corresponds to a sequence $(\tst_1,\{\clenR{r_1}{\inWord}\})
    \dots
    (\tst_k,\{\clenR{r_k}{\inWord}\})(\tst_{\star},\{\clenR{\inWord}{\inWord}\})$;
    at this point we put no constraints on the incoming data values so
    $\tst_1,\dots,\tst_k,\tst_{\star}$ can be any tests over $(\bbN,<,0)$. Note that the last
    value corresponds to the length of the input data value (which is a word
    over $\Sigma$).
  \item Then, $\a'$ checks that the values that have been read indeed yield a
    string-compatible counter valuation, in the sense of~\cite[Proposition 2 and
    Section 3]{DD16}, i.e. a counter valuation from which one can reconstruct a
    string valuation. Any action $(\tst_{\clen},\varnothing)$
    such that $\tst_{\clen}$ satisfies the following enforces that it is indeed the case:
    \begin{itemize}
    \item If $\clen_=(r,s,r',s') \in \tst$, then $\tst_{\clen}$ should contain $\clenR{r}{s} = \clenR{r'}{s'}$
    \item If $\clen_<(r,s,r',s') \in \tst$, then $\tst_{\clen}$ should contain $\clenR{r}{s} < \clenR{r'}{s'}$
    \item If $r = s \in \tst$, then $\tst_{\clen}$ should contain $\clenR{r}{r} = \clenR{r}{s} = \clenR{s}{s}$
    \item If $r = \epsilon \in \tst$, then $\clenR{r}{r} = 0 \in \tst_{\clen}$
    \item If $\neg \clen_=(r,s,r',s') \in \tst$, then $\tst_{\clen}$ should contain $\clenR{r}{s} < \clenR{r'}{s'}$ or $\clenR{r}{s} > \clenR{r'}{s'}$
    \item If $\neg \clen_<(r,s,r',s') \in \tst$, then $\tst_{\clen}$ should contain $\clenR{r}{s} = \clenR{r'}{s'}$ or $\clenR{r}{s} > \clenR{r'}{s'}$
    \item If $\neg(r = s) \in \tst$, then $\tst_{\clen}$ should contain $\clenR{r}{r} < \clenR{r}{s}$ ($r$ is a strict prefix of $s$) or, symmetrically, $\clenR{s}{s} < \clenR{r}{s}$, or $\clenR{r}{s} < \clenR{r}{r}$ or $\clenR{r}{s} < \clenR{s}{s}$ (they mismatch at some point)
    \item If $\neg(r = \epsilon) \in \tst$, then $\clenR{r}{r} > 0 \in \tst_{\clen}$
    Additionally, we require that $\tst_{\clen}$ implies $\psi_I \wedge
    \psi_{II} \wedge \psi_{III}$, where $\psi_{I},\psi_{II},\psi_{III}$ are
    defined in~\cite[Section 3.2]{DD16} as the syntactical counterparts
    of the formulas $\phi_{I}$, $\phi_{II}$ and $\phi_{III}$
    of~\cite[Proposition~2]{DD16}, and characterise counter valuations that are
    string-compatible~\cite[Lemmas~5,6]{DD16}. We recall them here, so that the
    reader can observe that they only depend on the predicates of
    $(\bbN,<,0)$.
    \begin{itemize}
    \item $\psi_I = \bigwedge_{r,r' \in R \cup \{\inWord\}} (\clenR{r}{r} \geq \clenR{r}{r'})$
    \item $\begin{aligned}
        \psi_{II} = &\bigwedge_{r^0, \dots, r^l \in R \cup \{\inWord\}} \left( \left( \bigwedge_{0 \leq i \leq l} (\clenR{r^0}{r^1} < \clenR{r^i}{r^i}) \right) \wedge \clenR{r^0}{r^1} = \dots = \clenR{r^0}{r^l} \right) \\
        &\Rightarrow \left( \bigwedge_{1 \leq i < j \leq l} (\clenR{r^0}{r^1} < \clenR{r^i}{r^j}) \right)
        \end{aligned}$
    \item $\psi_{III} = \bigwedge_{r,r',r'' \in R \cup \{\inWord\}} (\clenR{r}{r'} < \clenR{r'}{r''}) \Rightarrow (\clenR{r}{r'} = \clenR{r}{r''})$
    \end{itemize}
  \end{itemize}
    Note that all register variables range over $R \cup \{\inWord\}$, as we also
    check the properties for the incoming data value $\star$ (again, recall that they
    consist in words over $\Sigma^*$), as our goal is to be able to reconstruct
    the input sequence of finite strings.
  \end{itemize}
  Note also that we impose
  no condition on $\star$ (the input value in $\bbN$ at this moment): it is
  simply ignored, as we only use this action to conduct the tests.

  This translation of $R$-actions over $(\Sigma^*,<,\epsilon)$ to sequences of
  $R'$-actions over $(\bbN,<,0)$ is extended to action sequences homomorphically.
  Since the resulting relation $K$ is a morphism, it is in particular a rational
  relation. It remains to show that preserves feasibility.

  First, assume that $\a$ is feasible by some data word $x = w_0 w_1 \dots \in
  (\Sigma^*)^\omega$, and let $\v_0 \v_1 \dots$ be a sequence of valuations that
  witnesses this fact, where, for all $i \geq 0$, $\v_i: R \rightarrow
  \Sigma^*$, and $\v_0: r \mapsto \epsilon$. We translate it into the suitable
  sequence of $\clen$ values. Formally, for all $i \geq 0$,
  and for all $0 \leq j \leq k$, we let $n_i^j = \clen(\v_i(r_j),w_i)$. Additionally,
  $n_i^{\star} = \clen(w_i,w_i) = \length{w_i}$, and $n_i^{\clen} \in \bbN$ is
  some value whose choice does not matter (recall that the action corresponding
  to $\tst_{\clen}$ is dedicated to checking that the input values indeed
  satisfy the prefix constraints and the string-compatibility conditions). Let
  us show by induction on $i$ 
  that there exists an action word $\a'$ in $K(\a)$ such that $n_0^0
  \dots n_0^{k} n_0^{\star} n_0^{\clen} n_1^0 \dots n_1^{k} n_1^{\star}
  n_1^{\clen} \dots$ is compatible with $\a'$, and that it yields the
  following sequence of valuations: for all $i \geq 0$
  and all $j \in \{0,\dots,k,\star,\clen\}$, ${\v'}_i^j: \{r,s\} \in {R \choose
    2} \mapsto \clen(\v_i(r),\v_i(s))$ and, for all $0 \leq m \leq k$,
  ${\v'}_i^j(\{r_m,\inWord\}) =
  \left\{\begin{array}{l}\clen(\v_i(r_m),w_i) \text{ if $m \leq j$} \\
           {\v'}_{i-1}^k(r_m,\inWord) \text{ otherwise} \end{array}  \right.$.
       Finally, for all $j \in \{0,\dots,k,\star\}$, ${\v'}_i^j(\{\inWord\}) =
       \length{w_{i-1}}$, and ${\v'}_i^{\clen}(\{\inWord\}) = \length{w_i}$.

       For simplicity, we initialise the induction at $-1$ by picking $w_{-1} =
       \epsilon$, which makes the properties true. Now, assume that we have
       built $\a'$ up to step $i_0$. Thus, the current valuation is
       ${\v'}_{i_0}^\clen$, which is such that ${\v'}_{i_0}^\clen: \{r,s\}
       \mapsto \clen(\v_{i_0}(r),\v_{i_0}(s))$. Additionally,
       ${\v'}_{i_0}^\clen(r,\inWord) \mapsto \clen(r,\inWord)$ for all $r \in R
       \cup \{\inWord\}$. Now, the $R'$-action word reads $n_{i_0+1}^0 \dots
       n_{i_0+1}^k n_{i_0+1}^{\star} n_{i_0+1}^{\clen}$, as defined above. Since
       we set no conditions on $\tst_1,\dots,\tst_k,\tst_{\star}$, we can pick
       suitable tests so that $n_{i_0+1}^0 \dots n_{i_0+1}^k n_{i_0+1}^{\star}$
       satisfies them. Finally, $\tst_{\clen}$ does not depend on $\star$, so
       any value of $n_{i_0+1}^{\clen}$ is suitable. It remains to show that at
       this point, ${\v'}_{i_0+1}^{\clen}$ satisfies $\tst_{\clen}$. By
       definition, it is a string-compatible counter valuation, since it is
       obtained from a string valuation. Thus, by \cite[Proposition 2]{DD16}, it
       satisfies properties $\psi_{I}$, $\psi_{II}$ and $\psi_{III}$. Moreover,
       $\v_{i+1}$, along with $w_{i+1}$, satisfies the $\clen$ constraints, so
       ${\v'}_{i_0+1}^{\clen}$  satisfies them as well. Thus, the property holds
       at step $i_0+1$.

  Conversely, assume that some $\a' \in K(\a)$ is feasible by some data word
  $n_0^0 \dots n_0^{k} n_0^{\star} n_0^{\clen} \cdot n_1^0 \dots n_1^{k} n_1^{\star}
  n_1^{\clen} \dots \in \bbN^\omega$, along with a sequence of integer valuations ${\v'}_0^0
  \dots {\v'}_0^{k} {\v'}_0^{\star} {\v'}_0^{\clen} \cdot {\v'}_1^0 \dots
  {\v'}_1^{k} {\v'}_1^{\star} {\v'}_1^{\clen} \dots$. We build by induction on
  $i$ a sequence of string valuations $\v_i: R \rightarrow \Sigma^*$, along with
  a data word $x = w_0 w_1 \dots$ which are compatible with $\a$, and such that
  for all $i \geq 0$, ${\v'}_i^{0}: {R \cup \{\inWord\} \choose 2} \rightarrow
  \bbN$ is a counter valuation that is string-compatible with $\v_i[\star
  \leftarrow w_{i-1}]$, with the convention that $w_{-1} = \epsilon$.

  Initially, $\v_0: R \mapsto \epsilon$ and $w_{-1} = \epsilon$, so ${\v'}_0^0: \{r,s\} \mapsto 0$ for all $r,s \in R \cup \{\inWord\}$, is string-compatible with $\v_0$.

  Now, assume that we have built the $\v_j$ up to step $i \geq 0$. By
  construction of $K(\a')$, we know that ${\v'}_{i}^{l+1}$ satisfies $\psi_I
  \wedge \psi_{II} \wedge \psi_{III}$, as we ask that $\tst_{\clen}$ implies
  them, and is such that for all $r,s \in R$, ${\v'}_{i}^{\clen}(\{r,s\}) =
  \clen(\v_i(r),\v_i(s))$. Indeed, ${{\v'}_{i}^{\clen}}_{\mid R} =
  {{\v'}_i^0}_{\mid R}$, as the values of the $\clenR{r}{s}$ for $r,s \in R$ are
  left untouched before updating the registers when transitioning to
  ${\v'}_{i+1}^0$). In other words,
  the counters in ${\v'}_i^{\clen}$
  indeed contain the length of the longest common prefixes of the strings of
  $\v_i$ (in the terminology of~\cite{DD16}, this means that ${\v'}_i^{\clen}
  \approx_R \v_i$).
  Then, since ${\v'}_i^{\clen}$ additionally satisfies conditions
  $\psi_{I}$, $\psi_{II}$ and $\psi_{III}$ with regards to $\inWord$,
  by~\cite[Lemma 6]{DD16}, we can construct a value for $\inWord$ that is
  consistent with $\v_i$.
  More precsiely, by applying \cite[Lemma 6]{DD16} to $X =
  \{\inWord\}$, we know that there exists a string $w_i$ such that for all
  ${\v'}_{i}^{\clen} \approx_{R \cup \{\inWord\}} \v_i[\star \leftarrow w_i]$,
  i.e. for all $r,s \in R \cup \{\inWord\}$ (note the addition of $\inWord$), we
  have that ${\v'}_{i}^{\clen}(\{r,s\}) = \clen(\v_i(r),\v_i(s))$, where
  $\v_i(\star) = w_i$. Moreover, we know that the $\clen$ constraints are
  satisfied, since we encoded them in $\tst_{\clen}$. Thus, $w_0 \dots w_i
  w_{i+1}$ is compatible with $\a$ up to index $i+1$, and the property holds at
  step $i+1$.
\end{proof}

\end{document}